\documentclass[11pt]{article}

\usepackage[a4paper,margin=2.5cm]{geometry}
\usepackage[T1]{fontenc}
\usepackage[utf8]{inputenc}
\usepackage{mathtools}
\usepackage{amsmath}
\usepackage{amssymb}
\usepackage{amsfonts}
\usepackage{amsthm}
\usepackage{booktabs}
\usepackage{array}
\usepackage{graphicx}
\usepackage{float}
\usepackage{bm}
\usepackage{microtype}
\usepackage[numbers,sort&compress]{natbib}
\usepackage{authblk}
\usepackage[hidelinks]{hyperref}
\newcommand{\doi}[1]{\url{https://doi.org/#1}}
\hypersetup{
  pdftitle={Mutually unbiased bases as extremal probes of isotropic random Hamiltonians},
  pdfauthor={Emilio Semre and Steven Frankel},
  pdfkeywords={mutually unbiased bases, projective designs, random Hamiltonians, Gaussian processes, stochastic order}
}

\theoremstyle{plain}
\newtheorem{theorem}{Theorem}
\newtheorem{lemma}[theorem]{Lemma}
\newtheorem{proposition}[theorem]{Proposition}
\newtheorem{corollary}[theorem]{Corollary}

\theoremstyle{definition}

\theoremstyle{remark}

\newcommand{\C}{\mathbb C}
\newcommand{\R}{\mathbb R}
\newcommand{\E}{\mathbb E}
\newcommand{\Prob}{\mathbb P}
\newcommand{\Id}{I}
\newcommand{\HS}{\mathrm{HS}}
\newcommand{\dd}{\,\mathrm d}
\newcommand{\ket}[1]{\lvert #1\rangle}
\newcommand{\bra}[1]{\langle #1\rvert}
\newcommand{\ketbra}[2]{\lvert #1\rangle\!\langle #2\rvert}
\newcommand{\norm}[1]{\left\lVert #1\right\rVert}
\newcommand{\abs}[1]{\left\lvert #1\right\rvert}
\DeclareMathOperator{\Tr}{Tr}
\DeclareMathOperator{\Var}{Var}
\DeclareMathOperator{\Cov}{Cov}
\DeclareMathOperator{\conv}{conv}

\newenvironment{ruledtabular}{\begin{center}}{\end{center}}

\newcommand{\paperfig}[2]{%
  \IfFileExists{#1}{\includegraphics[width=#2]{#1}}{%
    \fbox{\parbox[c][0.20\textheight][c]{\dimexpr#2-2\fboxsep-2\fboxrule\relax}{\centering
      Missing figure file:\\[2pt]\texttt{\detokenize{#1}}}}%
  }%
}

\begin{document}

\title{Mutually unbiased bases as extremal probes of isotropic random Hamiltonians}

\author[1]{Emilio Semre}
\author[2]{Steven Frankel}
\affil[1]{Department of Computer Science, Technion--Israel Institute of Technology, Haifa, Israel}
\affil[2]{Faculty of Mechanical Engineering, Technion--Israel Institute of Technology, Haifa, Israel}
\date{}

\maketitle

\begin{center}
\small Corresponding author: \href{mailto:abed.semre@campus.technion.ac.il}{abed.semre@campus.technion.ac.il}
\end{center}

\begin{abstract}
Mutually unbiased bases (MUBs) are central finite structures in quantum
information. We ask whether complete MUB systems have an extremal probing
property beyond their projective \(2\)-design identities. For an isotropic
Gaussian traceless Hamiltonian, we prove that, among labeled unions of
\(d+1\) orthonormal bases, a complete MUB union has the stochastically largest
sampled maximum. Each basis induces the same regular-simplex Gaussian block;
mutual unbiasedness eliminates cross-block covariance, joint Gaussianity
yields independence, and a centered-convex Gaussian correlation inequality
makes this independent coupling extremal. We also derive a radial-mixture
extension and prove exact MUB-family collapse for fully matched diagonal-cost
constructions.
\end{abstract}

\noindent\textbf{Keywords:} mutually unbiased bases; projective designs;
random Hamiltonians; Gaussian processes; stochastic order; variational quantum
algorithms

\section{Introduction}
\label{sec:introduction}

Mutually unbiased bases (MUBs) encode an extremal form of complementarity in a
finite-dimensional quantum system. Two orthonormal bases
\(\mathcal B_a=\{\psi_{a,i}\}_{i=1}^d\) and
\(\mathcal B_b=\{\psi_{b,j}\}_{j=1}^d\) of \(\C^d\) are mutually unbiased if
\begin{equation}
  \abs{\langle\psi_{a,i}\vert\psi_{b,j}\rangle}^2=\frac1d
  \qquad (a\ne b)
  \label{eq:mub-definition-intro}
\end{equation}
for all \(i,j\). A complete system consists of \(d+1\) pairwise mutually
unbiased bases. Such systems exist in every prime-power dimension, including
all qubit-register dimensions \(d=2^n\), while existence in general composite
dimensions remains unresolved \cite{WoottersFields1989,BandyopadhyayEtAl2002}.
The resulting measurement complementarity has been used in state
determination and tomography, cryptographic protocols, and entanglement
detection \cite{WoottersFields1989,SpenglerEtAl2012MUBEntanglement}. Complete
MUB systems also have a distinguished design-theoretic property: their
\(d(d+1)\) projective states form a complex projective state \(2\)-design
\cite{KlappeneckerRoetteler2005,RoyScott2007}.

These facts concern measurement statistics and low-order moments. They leave
open a different geometric question. Suppose that a finite set of pure states
is used to probe a random quantum observable through its expectation values.
Which union of orthonormal bases is most likely to contain a state with an
extreme expectation value? The answer cannot in general be read from the
one-state marginal distribution or from a projective-design identity: the
maximum of a finite random field depends on the full joint dependence between
all sampled values.

We study this question for an isotropic Gaussian random traceless Hamiltonian.
Writing
\begin{equation}
  Q_\psi=\ketbra\psi\psi-\frac{I}{d},
\end{equation}
the sampled value at \(\ket\psi\) is
\(X_\psi=\Tr(HQ_\psi)\). Isotropy gives the covariance kernel
\begin{equation}
  \Cov(X_\psi,X_\phi)
  =\abs{\langle\psi\vert\phi\rangle}^2-\frac1d.
  \label{eq:intro-covariance}
\end{equation}
For a labeled state collection \(\mathcal S\), let
\(M_{\mathcal S}=\max_{\psi\in\mathcal S}X_\psi\). Our comparison class is the
set of labeled unions of \(d+1\) orthonormal bases, with repeated states
allowed through the basis labels.

The main theorem gives a complete answer in every dimension admitting a
complete MUB system. If \(\mathcal M\) is the labeled union of a complete MUB
system and \(\mathcal S\) is any labeled union of \(d+1\) orthonormal bases,
then
\begin{equation}
  \Prob(M_{\mathcal S}\le t)
  \ge \Prob(M_{\mathcal M}\le t)
  \qquad\text{for every }t\in\R.
  \label{eq:intro-stochastic-order}
\end{equation}
Thus \(M_{\mathcal M}\) is largest in the usual stochastic order, not only in
expectation. In particular, the complete MUB union maximizes the Gaussian
width \(\E M_{\mathcal S}\) within the stated comparison class.

The proof exposes the relevant quantum geometry. After the centered-projector
embedding, the \(d\) states of any orthonormal basis are the vertices of a
regular simplex in the real Hilbert space of traceless Hermitian operators.
Accordingly, every basis induces the same regular-simplex Gaussian block.
For two bases, Eq.~\eqref{eq:intro-covariance} shows that their cross-block
covariance is their transition-probability matrix minus the uniform matrix.
Mutual unbiasedness therefore sets the entire cross-block covariance to zero;
joint Gaussianity then makes the blocks independent. Schematically,
\begin{equation}
  \begin{aligned}
    \text{orthonormal basis}
    &\longrightarrow \text{regular-simplex Gaussian block},\\
    \text{mutual unbiasedness}
    &\longrightarrow \text{independent blocks}.
  \end{aligned}
\end{equation}
The remaining step is not a generic independence heuristic. The sublevel sets
of the simplex support function are convex and have zero Gaussian barycenter.
A finite-intersection centered-convex Gaussian correlation inequality
\cite{NakamuraTsuji2025CenteredGCI} then shows that every admissible dependence
places at least as much probability below a common threshold as the independent
coupling. This yields Eq.~\eqref{eq:intro-stochastic-order}.

The result is not implied by the projective \(2\)-design property. A state
\(2\)-design reproduces specified Haar moments under uniform sampling from the
state set. By contrast, Eq.~\eqref{eq:intro-stochastic-order} is a nonlinear
extreme-value statement for a finite jointly Gaussian process and depends on
its complete covariance matrix. Complete MUBs are extremal here because their
basis blocks are exactly decorrelated, not merely because their averaged
second moments match Haar measure.

The isotropic ensemble is a unitarily invariant reference model on the full
traceless-operator space. Local, sparse, diagonal, chemistry, QUBO, QAOA, and
quantum random access optimization (QRAO) Hamiltonians occupy structured,
anisotropic operator subspaces. The theorem therefore does not transfer to
those settings without a separate covariance analysis. We make this boundary
exact for a broad diagonal construction: if a MUB-family circuit has the form
\(D_rH_{\mathrm{Had}}^{\otimes n}\), with \(D_r\) diagonal, and both the
initial state and mixer are transformed by the same circuit, then the rotated
diagonal cost is independent of the family label \(r\). This family-collapse
proposition separates the MUB extremality theorem from claims about generic
variational performance.

MUBs have also been proposed as discrete probes and initial states for
variational quantum algorithms \cite{AlfassiMeiromMor2024DQES}. To delineate
what may occur beyond the isotropic model, we retain two finite-size case
studies from the existing work: an unmatched adaptive MUB-XRot warm start and a
local family search for a non-diagonal QRAO relaxation. They are secondary
illustrations. Their gains are accompanied by unequal parameter counts,
substantial classical-search overhead, and, in the QRAO study, a dominant
classical-prescreen contribution. They provide neither a consequence of the
main theorem nor an asymptotic advantage claim.

The paper is organized as follows. Section~\ref{sec:geometry} develops the
centered-projector and simplex-block representation. Section~\ref{sec:extremality}
proves stochastic extremality and its immediate consequences.
Section~\ref{sec:application-boundary} treats structured Hamiltonians and the
diagonal family-collapse theorem. Section~\ref{sec:variational-illustrations}
summarizes the finite-size boundary tests; their complete protocols and
resource accounting are moved to the appendices. Section~\ref{sec:discussion}
places the result relative to projective designs, structured ensembles, and
variational trainability.

\section{Mutually unbiased bases and the induced Gaussian process}
\label{sec:geometry}

\subsection{Centered projectors and isotropic random observables}

Fix \(d\ge2\). Let \(\mathfrak h_0(d)\) be the real Hilbert space of traceless Hermitian
\(d\times d\) matrices with Hilbert--Schmidt inner product
\begin{equation}
  \langle A,B\rangle_{\HS}=\Tr(AB).
\end{equation}
For a normalized pure state \(\ket\psi\), define
\begin{equation}
  P_\psi=\ketbra\psi\psi,
  \qquad
  Q_\psi=P_\psi-\frac{\Id}{d}.
  \label{eq:centered-projector}
\end{equation}
The identity component of a Hamiltonian shifts every pure-state energy by the
same amount. If
\begin{equation}
  H=H_0+\frac{\Tr H}{d}\Id,
  \qquad H_0\in\mathfrak h_0(d),
\end{equation}
then maximizers, minimizers, and differences between finite state ensembles are
unchanged by replacing \(H\) with \(H_0\). The traceless formulation is
therefore a centering convention rather than a restriction on the physical
spectrum.

A standard isotropic Gaussian random observable (or Hamiltonian) is a centered Gaussian random element
\(H\in\mathfrak h_0(d)\) satisfying
\begin{equation}
  \Cov\!\left(\Tr(HA),\Tr(HB)\right)=\Tr(AB)
  \qquad A,B\in\mathfrak h_0(d).
  \label{eq:isotropic-definition}
\end{equation}
Equivalently, the coordinates of \(H\) in any Hilbert--Schmidt orthonormal
basis of \(\mathfrak h_0(d)\) are independent standard normal variables. The
induced field
\begin{equation}
  X_\psi=\Tr(HQ_\psi)
\end{equation}
has covariance
\begin{align}
  \Cov(X_\psi,X_\phi)
  &=\Tr(Q_\psi Q_\phi) \notag\\
  &=\abs{\langle\psi\vert\phi\rangle}^2-\frac1d.
  \label{eq:field-covariance}
\end{align}
In particular,
\begin{equation}
  \Var(X_\psi)=1-\frac1d,
  \qquad
  d_H(\psi,\phi)^2
  =2\left(1-\abs{\langle\psi\vert\phi\rangle}^2\right),
  \label{eq:canonical-metric}
\end{equation}
where \(d_H\) is the canonical Gaussian metric.

For a finite labeled list \(\mathcal S\) of pure states, define
\begin{equation}
  M_{\mathcal S}=\max_{\psi\in\mathcal S}X_\psi,
  \qquad
  W(\mathcal S)=\E M_{\mathcal S}.
  \label{eq:width-definition}
\end{equation}
Multiplicity is retained because the comparison class below is a labeled
collection of bases; repeated states do not change the maximum but preserve the
basis-block representation.

The ensemble in Eq.~\eqref{eq:isotropic-definition} is unitarily invariant
in the traceless-operator space. The geometry of \(\mathcal S\) is therefore
the only source of covariance. Locality and a preferred Pauli decomposition are
absent. This invariance permits an exact geometric theorem, while also fixing
its scope.

\subsection{Orthonormal bases as regular-simplex Gaussian blocks}

Let
\(\mathcal B_a=\{\psi_{a,1},\ldots,\psi_{a,d}\}\) be an orthonormal basis and
set \(X_{a,i}=X_{\psi_{a,i}}\). Orthonormality and
Eq.~\eqref{eq:field-covariance} give
\begin{equation}
  \Cov(X_{a,i},X_{a,j})=\delta_{ij}-\frac1d,
  \qquad
  \sum_{i=1}^d X_{a,i}=0.
  \label{eq:basis-covariance}
\end{equation}
Choose vectors \(v_1,\ldots,v_d\in\R^{d-1}\) satisfying
\begin{equation}
  v_i\cdot v_j=\delta_{ij}-\frac1d,
  \qquad
  \sum_{i=1}^d v_i=0.
  \label{eq:simplex-vectors}
\end{equation}
These are the vertices of a centered regular simplex. Let \(V\) be the
\(d\times(d-1)\) matrix whose \(i\)th row is \(v_i^{\mathsf T}\). Then
\begin{equation}
  VV^{\mathsf T}=I_d-\frac1d\bm 1\bm 1^{\mathsf T},
  \qquad
  V^{\mathsf T}V=I_{d-1}.
  \label{eq:V-identities}
\end{equation}

\begin{lemma}[Simplex-block representation]
\label{lem:simplex-block}
Let \(\mathcal S=(\mathcal B_1,\ldots,\mathcal B_m)\) be a labeled union of
orthonormal bases. There exist jointly Gaussian vectors
\((Y_1,\ldots,Y_m)\), each with marginal law \(N(0,I_{d-1})\), such that
\begin{equation}
  X_{a,i}=\langle v_i,Y_a\rangle
  \qquad (a=1,\ldots,m;\ i=1,\ldots,d).
  \label{eq:block-representation}
\end{equation}
For \(a\ne b\), their cross-covariance is
\begin{equation}
  \Cov(Y_a,Y_b)=V^{\mathsf T}K^{ab}V,
  \qquad
  K^{ab}_{ij}=\abs{\langle\psi_{a,i}\vert\psi_{b,j}\rangle}^2-\frac1d.
  \label{eq:cross-block-covariance}
\end{equation}
If the bases are mutually unbiased, then \(Y_1,\ldots,Y_m\) are independent.
\end{lemma}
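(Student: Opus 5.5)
The plan is to define the blocks explicitly as linear images of the field and then read off every claim from the identities in Eq.~\eqref{eq:V-identities}. For each basis \(\mathcal B_a\) let \(X_a=(X_{a,1},\ldots,X_{a,d})^{\mathsf T}\in\R^d\) and set
\begin{equation*}
  Y_a=V^{\mathsf T}X_a\in\R^{d-1}.
\end{equation*}
All \(Y_a\) are linear functionals of the single Gaussian element \(H\), so \((Y_1,\ldots,Y_m)\) is jointly Gaussian and centered.

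To verify the representation \eqref{eq:block-representation}, I would compute \(VY_a=VV^{\mathsf T}X_a=(I_d-\tfrac1d\bm1\bm1^{\mathsf T})X_a=X_a\). The last equality uses \(\sum_iX_{a,i}=\Tr(H\sum_iQ_{\psi_{a,i}})=\Tr(H(I-I))=0\) from Eq.~\eqref{eq:basis-covariance}. The \(i\)th row of this identity is exactly \(X_{a,i}=\langle v_i,Y_a\rangle\).

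For the marginal law, Eq.~\eqref{eq:basis-covariance} says \(\Cov(X_a)=I_d-\tfrac1d\bm1\bm1^{\mathsf T}=VV^{\mathsf T}\), so \(\Cov(Y_a)=V^{\mathsf T}VV^{\mathsf T}V=I_{d-1}\) by \(V^{\mathsf T}V=I_{d-1}\). For \(a\ne b\), Eq.~\eqref{eq:field-covariance} gives \(\Cov(X_a,X_b)=K^{ab}\) entrywise. Hence \(\Cov(Y_a,Y_b)=V^{\mathsf T}K^{ab}V\), which is \eqref{eq:cross-block-covariance}. Repeated states across labels cause no difficulty, since the formula is entrywise.

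For the mutually unbiased case, Eq.~\eqref{eq:mub-definition-intro} gives \(K^{ab}=0\) for every \(a\ne b\), so all cross-covariances vanish. The joint covariance of \((Y_1,\ldots,Y_m)\) is then block diagonal. For a jointly Gaussian vector, a block-diagonal covariance is equivalent to independence of the blocks, because the characteristic function factorizes.

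There is no real obstacle here. The only point needing care is that the \(Y_a\) are jointly Gaussian, and not merely Gaussian one at a time. That is why I would define each \(Y_a\) through \(H\), rather than choosing each block's Gaussian vector separately. One could also check that \(Y_a\) does not depend on the choice of simplex vectors beyond an orthogonal change of coordinates, but the lemma does not need this.
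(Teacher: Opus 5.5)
Your proposal is correct and follows the paper's proof essentially step for step: you define \(Y_a=V^{\mathsf T}X_a\), recover \(X_a=VY_a\) from \(VV^{\mathsf T}=I_d-\frac1d\bm 1\bm 1^{\mathsf T}\) together with \(\sum_iX_{a,i}=0\), read off \(\Cov(Y_a)=I_{d-1}\) and \(\Cov(Y_a,Y_b)=V^{\mathsf T}K^{ab}V\), and then use joint Gaussianity to turn \(K^{ab}=0\) into independence. You also state explicitly that joint Gaussianity holds because every block is a linear image of the single element \(H\), which the paper leaves implicit.
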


\begin{proof}
Write \(X_a=(X_{a,1},\ldots,X_{a,d})^{\mathsf T}\). By
Eq.~\eqref{eq:basis-covariance}, \(X_a\in\bm{1}^\perp\) almost surely and its
covariance is the orthogonal projector onto \(\bm{1}^\perp\). Define
\(Y_a=V^{\mathsf T}X_a\). Then \(Y_a\) is standard Gaussian in
\(\R^{d-1}\), while Eqs.~\eqref{eq:V-identities} and
\(X_a\in\bm{1}^\perp\) give \(X_a=VY_a\), proving
Eq.~\eqref{eq:block-representation}. Equation~\eqref{eq:cross-block-covariance}
follows by taking cross-covariances. For mutually unbiased bases,
\(K^{ab}=0\); hence the blocks are uncorrelated. Joint Gaussianity then implies
independence.
\end{proof}

Thus all unions of orthonormal bases have exactly the same marginal block law.
The basis choice enters only through the admissible cross-block covariance
matrices in Eq.~\eqref{eq:cross-block-covariance}. Complete mutual unbiasedness
selects the fully decorrelated point of this covariance class.

\subsection{Projective designs versus extreme-value functionals}

A complete MUB ensemble is a complex projective \(2\)-design
\cite{KlappeneckerRoetteler2005,RoyScott2007}. This implies exact first- and
second-moment identities under uniform sampling from its \(d(d+1)\) states.
The functional in Eq.~\eqref{eq:width-definition}, however, is a maximum of a
joint Gaussian vector. Its distribution depends on the entire covariance
matrix, including the block structure retained by the labeled basis union.
Consequently, projective-design moment matching is useful context but is not a
proof of extreme-value optimality.

\section{Stochastic extremality of complete MUB systems}
\label{sec:extremality}

\subsection{Centered-convex correlation for Gaussian blocks}

We first isolate the probabilistic mechanism. The following proposition is a
direct specialization of the finite-intersection centered-convex Gaussian
correlation theorem of Nakamura and Tsuji
\cite[Theorem~1.5]{NakamuraTsuji2025CenteredGCI}; its novelty here is its
identification with the basis-union problem.

\begin{proposition}[Identical Gaussian blocks with centered sublevel sets]
\label{prop:gaussian-block}
Let \((Y_1,\ldots,Y_m)\) be jointly Gaussian vectors in \(\R^q\), with
\(Y_a\sim N(0,I_q)\) for every \(a\). Let \(K\subseteq\R^q\) be a convex
Borel set with zero Gaussian barycenter,
\begin{equation}
  \int_K y\,\dd\gamma_q(y)=0,
  \label{eq:gaussian-barycenter}
\end{equation}
where \(\gamma_q\) denotes standard Gaussian measure. Then
\begin{equation}
  \Prob(Y_1\in K,\ldots,Y_m\in K)
  \ge \gamma_q(K)^m.
  \label{eq:block-correlation}
\end{equation}
The right-hand side is the simultaneous-membership probability for independent
standard Gaussian blocks.
\end{proposition}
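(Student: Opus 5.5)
The plan is to reduce Proposition~\ref{prop:gaussian-block} to the finite-intersection centered-convex Gaussian correlation theorem of Nakamura and Tsuji \cite[Theorem~1.5]{NakamuraTsuji2025CenteredGCI}. That theorem says: if \(Z\) is a centered Gaussian vector in \(\R^N\) and \(P_1,\ldots,P_m\) are linear maps, each with \(P_aZ\) standard Gaussian on its range, and if \(K_a\) are convex sets with zero Gaussian barycenter, then \(\Prob(P_aZ\in K_a\ \forall a)\ge\prod_a\Prob(P_aZ\in K_a)\).

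First, realize the joint law. Let \(Z=(Y_1,\ldots,Y_m)\in\R^{mq}\) with covariance \(\Sigma\). Its diagonal blocks equal \(I_q\), and its off-diagonal blocks are arbitrary subject to \(\Sigma\succeq0\). Degenerate \(\Sigma\) is allowed, since the rank-deficient case is either covered by the cited theorem or reached by approximation below. Take \(P_a\) to be the coordinate projection onto the \(a\)th block, so that \(P_aZ=Y_a\sim N(0,I_q)\) and the marginal hypothesis holds exactly.

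Next, use the same set \(K_a=K\) for every \(a\). The barycenter condition \eqref{eq:gaussian-barycenter} is stated with respect to \(\gamma_q\), and \(\gamma_q\) is precisely the law of each \(Y_a\), so the hypothesis of the cited theorem is met. Applying the theorem gives
\begin{equation*}
  \Prob\Bigl(\bigcap_a\{Y_a\in K\}\Bigr)\ge\prod_a\gamma_q(K)=\gamma_q(K)^m .
\end{equation*}
The final remark in the proposition is then immediate: for independent blocks the joint probability factorizes into exactly \(\gamma_q(K)^m\).

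The main obstacle is matching hypotheses precisely. This means checking whether the cited theorem is stated for open, closed, or arbitrary Borel convex sets, and whether it requires a nondegenerate joint covariance. If it requires nondegeneracy, I would handle this by approximation: replace \(\Sigma\) with \((1-\varepsilon)\Sigma+\varepsilon I\), which keeps the diagonal blocks equal to \(I_q\). I would then pass to the limit using weak convergence. The boundary of a convex set has Lebesgue measure zero, so \(\partial K\) is a null set for every Gaussian law whose support is a subspace not contained in it. In the lower-dimensional case, where \(K\) might be contained in a hyperplane, \(\gamma_q(K)=0\) and the inequality is trivial.

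If the cited theorem is instead phrased for a single standard Gaussian with projections (a Brascamp--Lieb style formulation), I would write \(Z=\Sigma^{1/2}G\) with \(G\sim N(0,I_{mq})\). The maps \(P_a\Sigma^{1/2}\) then satisfy \(P_a\Sigma P_a^{\mathsf T}=I_q\), meaning they are partial isometries onto \(\R^q\), which is exactly the setting of that formulation.
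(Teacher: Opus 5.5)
Your argument has a gap at exactly the point where the paper's proof does its work. You state the Nakamura--Tsuji theorem with each $P_a$ a block projection and centering measured under the marginal law of $P_aZ$. In that form it is the proposition itself (only with $K_a$ allowed to vary), so your reduction just re-cites the claim. The finite-intersection inequality the paper invokes is a statement in a single space. Convex sets $C_1,\ldots,C_m\subseteq\R^s$, each with $\int_{C_a}g\,\dd\gamma_s(g)=0$ for the standard Gaussian $\gamma_s$ of that space, satisfy $\gamma_s(\bigcap_a C_a)\ge\prod_a\gamma_s(C_a)$. To use it you must lift $K$ into that space and check that the lifts are centered there. Your sentence ``$\gamma_q$ is precisely the law of each $Y_a$, so the hypothesis of the cited theorem is met'' skips this check. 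Centering of $K$ under $\gamma_q$ is not, on its face, the hypothesis; centering of the pulled-back sets under $\gamma_s$ is.

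The missing step is short, and your last paragraph already contains the ingredients. Take $G\sim N(0,I_s)$ and maps $B_a$ with $Y_a=B_aG$ and $B_aB_a^{\mathsf T}=I_q$; for example $s=mq$ and $B_a=P_a\Sigma^{1/2}$. Set $C_a=\{g\in\R^s:B_ag\in K\}$. These are convex Borel cylinders with $\gamma_s(C_a)=\gamma_q(K)$. Gaussian regression gives $\E[G\mid Y_a]=B_a^{\mathsf T}Y_a$, hence
\[
\int_{C_a}g\,\dd\gamma_s(g)=\E\bigl[G\mathbf 1_{\{Y_a\in K\}}\bigr]=B_a^{\mathsf T}\E\bigl[Y_a\mathbf 1_{\{Y_a\in K\}}\bigr]=0 .
\]
Applying the common-space inequality to $C_1,\ldots,C_m$ then yields $\gamma_q(K)^m$ as the lower bound. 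Note also that on this route $G$ is nondegenerate whatever the rank of $\Sigma$. Your $\varepsilon$-perturbation and boundary-null-set limit are therefore unnecessary. The hypothesis that actually needs verifying is the ambient centering of the lifted sets, not openness or nondegeneracy.
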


\begin{proof}
There are a standard Gaussian vector \(G\sim N(0,I_s)\) and linear maps
\(B_a:\R^s\to\R^q\) such that \(Y_a=B_aG\) and
\(B_aB_a^{\mathsf T}=I_q\). Define the cylinders
\begin{equation}
  C_a=\{g\in\R^s:B_ag\in K\}.
\end{equation}
They are convex. Gaussian regression gives
\(\E[G\mid Y_a]=B_a^{\mathsf T}Y_a\), and therefore
\begin{align}
  \int_{C_a}g\,\dd\gamma_s(g)
  &=\E\!\left[G\mathbf{1}_{\{Y_a\in K\}}\right] \notag\\
  &=B_a^{\mathsf T}\E\!\left[Y_a\mathbf{1}_{\{Y_a\in K\}}\right]=0.
  \label{eq:cylinder-centering}
\end{align}
The finite-intersection centered-convex Gaussian correlation inequality applied
to \(C_1,\ldots,C_m\) yields
\begin{equation}
  \gamma_s\!\left(\bigcap_{a=1}^m C_a\right)
  \ge\prod_{a=1}^m\gamma_s(C_a)
  =\gamma_q(K)^m,
\end{equation}
which is Eq.~\eqref{eq:block-correlation}.
\end{proof}

The finite-intersection theorem is essential here. Iterating only a two-set
inequality would not suffice because an intersection of centered convex sets
need not retain zero Gaussian barycenter.

\subsection{Centered sublevel sets of the simplex support function}

Define the simplex support function
\begin{equation}
  h(y)=\max_{1\le i\le d}\langle v_i,y\rangle
  \label{eq:simplex-support}
\end{equation}
and its sublevel set
\begin{equation}
  K_t=\{y\in\R^{d-1}:h(y)\le t\}.
  \label{eq:Kt}
\end{equation}
Because \(\sum_i\langle v_i,y\rangle=0\), one has \(h(y)\ge0\) for every
\(y\). Thus \(K_t=\varnothing\) for \(t<0\).

\begin{lemma}[Gaussian centering of simplex sublevel sets]
\label{lem:Kt-centered}
For every \(t\in\R\), \(K_t\) is convex and
\begin{equation}
  \int_{K_t}y\,\dd\gamma_{d-1}(y)=0.
  \label{eq:Kt-barycenter}
\end{equation}
\end{lemma}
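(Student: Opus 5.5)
Convexity is immediate: \(h\) is a maximum of the linear functionals \(y\mapsto\langle v_i,y\rangle\), so it is convex, and each sublevel set \(K_t\) is convex (and closed, hence Borel). For \(t<0\) the set is empty and Eq.~\eqref{eq:Kt-barycenter} holds trivially. The substance of the lemma is therefore the zero barycenter for \(t\ge0\). I would obtain it from symmetry of the regular simplex rather than any direct computation.

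The key step is the finite symmetry group of the simplex. Let \(G\subseteq O(d-1)\) be the group of orthogonal maps \(R\) that permute the vertices \(\{v_1,\dots,v_d\}\). Such a map exists for every permutation \(\sigma\in S_d\). The Gram matrix \(\delta_{ij}-\tfrac1d\) is invariant under simultaneous permutation of rows and columns. Since the \(v_i\) span \(\R^{d-1}\) (because \(V^{\mathsf T}V=I_{d-1}\)), the prescription \(Rv_i=v_{\sigma(i)}\) therefore extends to a well-defined linear isometry. Explicitly, \(R=V^{\mathsf T}P_\sigma V\) for the permutation matrix \(P_\sigma\), and one checks \(R^{\mathsf T}R=I_{d-1}\) using Eq.~\eqref{eq:V-identities} and the fact that \(P_\sigma\) commutes with \(I-\tfrac1d\bm1\bm1^{\mathsf T}\).

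For \(R\in G\) one has \(h(Ry)=\max_i\langle R^{\mathsf T}v_i,y\rangle=h(y)\), since \(R^{\mathsf T}=R^{-1}\) also permutes the vertices. Hence \(RK_t=K_t\). Because \(\gamma_{d-1}\) is rotation invariant, the barycenter \(b=\int_{K_t}y\,\dd\gamma_{d-1}(y)\) then satisfies \(Rb=b\) for all \(R\in G\).

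It remains to show that the only \(G\)-fixed vector is \(0\). Averaging over \(S_d\) gives
\[
\frac1{d!}\sum_{\sigma}V^{\mathsf T}P_\sigma V=V^{\mathsf T}\Big(\tfrac1d\bm1\bm1^{\mathsf T}\Big)V=0,
\]
since \(V^{\mathsf T}\bm1=\sum_i v_i=0\). So the projector onto the fixed subspace vanishes, and \(b=0\) follows.

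The only mildly delicate point is this last step, the irreducibility: the fixed space of the simplex symmetry group is trivial. The averaging identity above is exactly the computation I would write out, using \(\frac1{d!}\sum_\sigma P_\sigma=\tfrac1d\bm1\bm1^{\mathsf T}\). Integrability of \(y\) against \(\gamma_{d-1}\) is immediate, so the barycenter is well defined.
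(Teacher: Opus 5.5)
Your proof is correct and follows the same route as the paper. Convexity comes from \(h\) being a maximum of linear functionals. Then \(K_t\) and \(\gamma_{d-1}\) are both invariant under the vertex-permuting orthogonal maps, and the standard representation of the simplex symmetry group has no nonzero fixed vector. Your explicit maps \(R_\sigma=V^{\mathsf T}P_\sigma V\) and the averaging identity \(\frac1{d!}\sum_\sigma R_\sigma=0\) simply carry out the fixed-vector step, which the paper asserts without computation.
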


\begin{proof}
Convexity follows because \(h\) is a support function. The claim is trivial for
\(t<0\). For \(t\ge0\), \(K_t\) is invariant under every orthogonal symmetry
that permutes the regular-simplex vertices. Standard Gaussian measure has the
same invariance. Its restricted first moment must therefore be fixed by the
standard \((d-1)\)-dimensional representation of the simplex symmetry group.
That representation has no nonzero fixed vector, so the first moment vanishes.
\end{proof}

\subsection{Stochastic extremality theorem}

\begin{theorem}[Stochastic extremality of complete MUB unions]
\label{thm:main}
Assume that \(d+1\) mutually unbiased bases exist in \(\C^d\), and let
\(\mathcal M\) be their labeled union. Let
\(\mathcal S=(\mathcal B_1,\ldots,\mathcal B_{d+1})\) be any labeled union of
\(d+1\) orthonormal bases, with basis membership counted with multiplicity.
Then
\begin{equation}
  \Prob(M_{\mathcal S}\le t)
  \ge
  \Prob(M_{\mathcal M}\le t)
  \qquad\text{for every }t\in\R.
  \label{eq:stochastic-order}
\end{equation}
Equivalently, \(M_{\mathcal S}\) is smaller than \(M_{\mathcal M}\) in the
usual stochastic order. In particular,
\begin{equation}
  W(\mathcal S)\le W(\mathcal M).
  \label{eq:width-order}
\end{equation}
By the symmetry \(H\stackrel{\mathrm d}= -H\),
\begin{equation}
  \E\min_{\psi\in\mathcal M}\Tr(HQ_\psi)
  \le
  \E\min_{\psi\in\mathcal S}\Tr(HQ_\psi).
  \label{eq:min-order}
\end{equation}
\end{theorem}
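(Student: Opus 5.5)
The plan is to express both maxima through the simplex support function and then compare the two joint distributions with Proposition~\ref{prop:gaussian-block}. Apply Lemma~\ref{lem:simplex-block} to \(\mathcal S=(\mathcal B_1,\ldots,\mathcal B_{d+1})\). This gives jointly Gaussian blocks \(Y_1,\ldots,Y_{d+1}\), each distributed as \(N(0,I_{d-1})\), with \(X_{a,i}=\langle v_i,Y_a\rangle\). Hence
\begin{equation*}
  M_{\mathcal S}=\max_{a}\max_i\langle v_i,Y_a\rangle=\max_a h(Y_a),
\end{equation*}
where \(h\) is the support function of Eq.~\eqref{eq:simplex-support}. Therefore, for every \(t\),
\begin{equation*}
  \{M_{\mathcal S}\le t\}=\{Y_1\in K_t,\ldots,Y_{d+1}\in K_t\}.
\end{equation*}
The same representation holds for \(\mathcal M\), with blocks \(Y_a'\). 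Because the bases of \(\mathcal M\) are pairwise mutually unbiased, the last assertion of Lemma~\ref{lem:simplex-block} makes the \(Y_a'\) independent standard Gaussians. Consequently
\begin{equation*}
  \Prob(M_{\mathcal M}\le t)=\gamma_{d-1}(K_t)^{d+1}.
\end{equation*}

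Next, fix \(t\). By Lemma~\ref{lem:Kt-centered}, \(K_t\) is convex and has zero Gaussian barycenter. It is Borel because it is closed, being a sublevel set of a continuous function. Proposition~\ref{prop:gaussian-block} with \(q=d-1\), \(m=d+1\), \(K=K_t\) then gives
\begin{equation*}
  \Prob(M_{\mathcal S}\le t)\ge\gamma_{d-1}(K_t)^{d+1}=\Prob(M_{\mathcal M}\le t).
\end{equation*}
For \(t<0\) both sides vanish, since \(K_t=\varnothing\). This proves Eq.~\eqref{eq:stochastic-order}. Labeled repetitions of states cause no difficulty, because the lemma is stated for labeled unions.

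For Eq.~\eqref{eq:width-order}, note that both maxima are integrable, since each is bounded in absolute value by a finite sum of \(|X_\psi|\). Write
\begin{equation*}
  \E M=\int_0^\infty \Prob(M>t)\,\dd t-\int_{-\infty}^0\Prob(M\le t)\,\dd t.
\end{equation*}
The pointwise inequality between distribution functions makes the first integral no larger for \(\mathcal S\), and the second vanishes for both lists, so \(W(\mathcal S)\le W(\mathcal M)\). For Eq.~\eqref{eq:min-order}, observe that \(-H\) is again a standard isotropic Gaussian element of \(\mathfrak h_0(d)\), since covariance \eqref{eq:isotropic-definition} is invariant under sign change. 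Hence
\begin{equation*}
  \min_\psi\Tr(HQ_\psi)=-\max_\psi\Tr((-H)Q_\psi)\stackrel{\mathrm d}=-M.
\end{equation*}
Negating Eq.~\eqref{eq:width-order} then yields the inequality for the minima.

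The main obstacle is the correlation step. The inequality fails for general dependent blocks unless the common set is centered, and the input needed here (convexity and zero barycenter of the same \(K_t\) for every block) is exactly what Lemma~\ref{lem:Kt-centered} and Proposition~\ref{prop:gaussian-block} supply. What remains to verify is that the identification of \(\{M_{\mathcal S}\le t\}\) with the simultaneous-membership event is exact, including the degenerate case \(t<0\).
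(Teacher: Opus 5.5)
Your proposal is correct and follows the paper's proof essentially step by step. The simplex-block representation turns $\{M_{\mathcal S}\le t\}$ into simultaneous membership of the blocks in $K_t$. Proposition~\ref{prop:gaussian-block} with Lemma~\ref{lem:Kt-centered} then bounds this below by $\gamma_{d-1}(K_t)^{d+1}$, which independence of the MUB blocks identifies with $\Prob(M_{\mathcal M}\le t)$, and tail integration plus $H\stackrel{\mathrm d}=-H$ give the expectation statements; your extra remarks (Borel measurability of $K_t$, integrability via $\sum|X_\psi|$, the two-sided tail formula) only make explicit what the paper leaves implicit.
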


\begin{proof}
By Lemma~\ref{lem:simplex-block},
\begin{equation}
  M_{\mathcal S}\stackrel{\mathrm d}=
  \max_{1\le a\le d+1}h(Y_a),
\end{equation}
where the \(Y_a\) are jointly Gaussian with standard marginals. Hence, for
\(t\ge0\),
\begin{align}
  \Prob(M_{\mathcal S}\le t)
  &=\Prob(Y_1\in K_t,\ldots,Y_{d+1}\in K_t) \notag\\
  &\ge \gamma_{d-1}(K_t)^{d+1},
  \label{eq:dependent-cdf-bound}
\end{align}
where Proposition~\ref{prop:gaussian-block} and
Lemma~\ref{lem:Kt-centered} were used. For a complete MUB system, the blocks
are independent by Lemma~\ref{lem:simplex-block}; therefore
\begin{equation}
  \Prob(M_{\mathcal M}\le t)=\gamma_{d-1}(K_t)^{d+1}.
  \label{eq:mub-cdf}
\end{equation}
Equations~\eqref{eq:dependent-cdf-bound} and \eqref{eq:mub-cdf} prove
Eq.~\eqref{eq:stochastic-order} for \(t\ge0\). Both probabilities vanish for
\(t<0\) because \(h\ge0\).

The maxima have finite expectations and are nonnegative. Integrating their
upper tails gives
\begin{equation}
  \E M=\int_0^\infty \Prob(M>t)\,\dd t,
\end{equation}
so Eq.~\eqref{eq:stochastic-order} implies Eq.~\eqref{eq:width-order}. Finally,
\begin{equation}
  \E\min_{\psi\in\mathcal S}X_\psi
  =-\E\max_{\psi\in\mathcal S}(-X_\psi)
  =-W(\mathcal S),
\end{equation}
because \((-X_\psi)_\psi\) has the same law as \((X_\psi)_\psi\). This gives
Eq.~\eqref{eq:min-order}.
\end{proof}

The theorem identifies a precise quantum-geometric principle: within the
admissible class of \(d+1\) orthonormal-basis blocks, complete mutual
unbiasedness realizes full Gaussian block decorrelation, and this decorrelation
maximizes the sampled extreme.

\subsection{Gaussian width, radial mixtures, and the extreme-value scale}

\begin{corollary}[Independent radial mixtures]
\label{cor:radial}
Let \(G\) be a standard isotropic Gaussian element of \(\mathfrak h_0(d)\),
let \(R\ge0\) be independent of \(G\), and set \(H=RG\). The stochastic order
in Eq.~\eqref{eq:stochastic-order} remains valid for the corresponding maxima.
If \(\E R<\infty\), then the expectation inequalities
Eqs.~\eqref{eq:width-order} and \eqref{eq:min-order} also remain valid.
\end{corollary}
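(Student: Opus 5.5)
The plan is to condition on the radial variable \(R\) and reduce everything to Theorem~\ref{thm:main}. Since \(Q_\psi\) is fixed, the sampled field under \(H=RG\) is \(X_\psi=R\,\Tr(GQ_\psi)=R\,X^G_\psi\), where \(X^G\) is the field induced by \(G\). Because \(R\ge0\), the maximum factorizes: \(M_{\mathcal S}=R\,M^G_{\mathcal S}\) for every labeled union \(\mathcal S\), with \(R\) independent of \(M^G_{\mathcal S}\). Write \(F_{\mathcal S}(s)=\Prob(M^G_{\mathcal S}\le s)\). Theorem~\ref{thm:main} gives \(F_{\mathcal S}(s)\ge F_{\mathcal M}(s)\) for all real \(s\).

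For the stochastic order, fix \(t\in\R\) and use Fubini with the law \(\mu_R\) of \(R\):
\begin{equation*}
  \Prob(M_{\mathcal S}\le t)=\int_{[0,\infty)}\Prob(rM^G_{\mathcal S}\le t)\,\dd\mu_R(r).
\end{equation*}
For \(r>0\), the integrand equals \(F_{\mathcal S}(t/r)\ge F_{\mathcal M}(t/r)\). For \(r=0\), it equals \(\mathbf 1_{\{t\ge0\}}\), and this value does not depend on \(\mathcal S\). The pointwise inequality therefore holds for every \(r\ge0\). Integrating against \(\mu_R\) gives \(\Prob(M_{\mathcal S}\le t)\ge\Prob(M_{\mathcal M}\le t)\).

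For the expectations, assume \(\E R<\infty\). Independence gives \(\E M_{\mathcal S}=\E R\cdot W^G(\mathcal S)\). This product is finite because \(M^G_{\mathcal S}\) is a finite maximum of Gaussians and so is integrable. Multiplying Eq.~\eqref{eq:width-order} for \(G\) by \(\E R\ge0\) gives the width inequality. For the minimum, note that \(-H=R(-G)\) and \(-G\stackrel{\mathrm d}=G\) independently of \(R\), so \((-X_\psi)_\psi\stackrel{\mathrm d}=(X_\psi)_\psi\). The argument at the end of the proof of Theorem~\ref{thm:main} then carries over verbatim and gives Eq.~\eqref{eq:min-order}. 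Alternatively, \(\E\min_{\psi\in\mathcal S} X_\psi=\E R\cdot\E\min_{\psi\in\mathcal S} X^G_\psi\) directly.

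The main obstacle is only bookkeeping. One must handle the atom of \(R\) at \(0\) and justify the use of Fubini and the factorization of expectations, which requires integrability of \(R\,M^G\). The sign condition \(R\ge0\) is what makes the maximum scale by \(R\) rather than turn into a minimum, so it is essential throughout.
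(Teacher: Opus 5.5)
Your proposal is correct and follows the paper's proof: condition on \(R=r\), apply Theorem~\ref{thm:main} to the rescaled maxima \(rM^G_{\mathcal S}\) and \(rM^G_{\mathcal M}\), average over the law of \(R\), factor \(\E M_{\mathcal S}=(\E R)\,W(\mathcal S)\) by independence, and use \(-G\stackrel{\mathrm d}=G\) for the minimum. Your explicit treatment of the atom at \(r=0\) and of integrability only fills in bookkeeping that the paper leaves implicit.
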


\begin{proof}
Conditioned on \(R=r\), every sampled energy and its maximum are multiplied by
\(r\). The conditional stochastic order follows from Theorem~\ref{thm:main}
and is preserved after averaging over \(R\). For the expectation,
\begin{equation}
  \E\max_{\psi\in\mathcal S}\Tr(RGQ_\psi)
  =(\E R)W(\mathcal S),
\end{equation}
and similarly for \(\mathcal M\). Symmetry gives the minimum statement.
\end{proof}

This corollary uses only an independent nonnegative common scale. It is not an
extension to arbitrary non-Gaussian or anisotropic Hamiltonian ensembles.

Let \(N=d(d+1)\). The complete MUB ensemble also has the generic finite-Gaussian
extreme-value scale
\begin{equation}
  W(\mathcal M)=\Theta(\sqrt{\log N}),
  \qquad
  \E\min_{\psi\in\mathcal M}X_\psi=-\Theta(\sqrt{\log N}).
  \label{eq:rate-scale}
\end{equation}
Indeed, the universal Gaussian maximal bound gives the upper estimate because
each variance is \(1-1/d\). Distinct MUB states have canonical distance at
least \(1\) by Eq.~\eqref{eq:canonical-metric}, so Sudakov minoration gives the
matching lower estimate \cite{LedouxTalagrand1991,BoucheronLugosiMassart2013}.
Appendix~\ref{app:rate} records the normalization. This is a standard
Gaussian-process rate observation; the new statement is the finite-dimensional
stochastic comparison in Theorem~\ref{thm:main}.

For \(d=2\), the complete MUB gives the six vertices of the regular octahedron
on the Bloch sphere. Appendix~\ref{app:qubit} verifies, using the
Cauchy--Hadwiger mean-width relation and Linhart's sharp edge-curvature bound,
that this octahedron also maximizes \(W\) among arbitrary lists of six pure
qubit states. This low-dimensional strengthening relies on three-dimensional
convex geometry and is not used elsewhere in the paper.

\subsection{Scope of the theorem}
\label{sec:not-say}

Theorem~\ref{thm:main} is exact, but its domain is narrow. It does not imply
optimality among arbitrary \(d(d+1)\)-state ensembles for \(d\ge3\), or for
arbitrary non-Gaussian, local, sparse, QUBO, QRAO, or chemistry Hamiltonian
distributions. It does not imply a better QAOA approximation ratio, a better
decoded combinatorial solution, easier optimization, avoidance of barren
plateaus, lower total runtime, or a quantum computational advantage. Those
questions depend on anisotropy, circuit dynamics, parameterization,
measurement, decoding, and classical search, none of which appears in the
isotropic basis-union theorem.

\section{Boundary for structured quantum Hamiltonians}
\label{sec:application-boundary}
\subsection{From finite probing to variational initialization}

For fixed variational angles \(\theta\), an initialized variational circuit
samples one Hermitian observable:
\begin{equation}
  \bra\psi U(\theta)^\dagger H_CU(\theta)\ket\psi.
  \label{eq:fixed-angle-observable}
\end{equation}
This formal identity is the only universal bridge between finite-set coverage
and initialization. It does not commute with angle optimization: different
initial states generally induce different landscapes, optimizer trajectories,
and resource costs. The isotropic theorem may therefore motivate a probe set,
but it cannot establish that optimizing from that set improves a structured
variational task.

The contrast is particularly sharp for physical and combinatorial
Hamiltonians. Such Hamiltonians typically occupy a sparse or local subspace of
the full traceless-Hermitian space. Their induced state-energy field is
anisotropic, and cross-basis covariance is no longer determined solely by
Eq.~\eqref{eq:mub-definition-intro}. Locality, constraints, and decoding may
favor one MUB family, a product subfamily, or no MUB structure at all.

The next proposition gives an exact structural boundary for a common matched
construction. It shows why nontrivial applications require anisotropy,
noncommutativity, unmatched pre-circuit freedom, or a combination of these
features.

\subsection{Exact family collapse for diagonal matched constructions}
\label{sec:diagonal-collapse}

Consider an \(n\)-qubit MUB-family circuit
\begin{equation}
  C_r=D_rH_{\mathrm{Had}}^{\otimes n},
  \label{eq:Cr-form}
\end{equation}
where \(D_r\) is diagonal in the computational basis. For a computational label \(b\), a fully matched construction prepares
\(C_r\ket b\) and uses the standard computational-basis \(Z\)-reference
mixer \(B_b\), transformed as
\begin{equation}
  B_{r,b}=C_rB_bC_r^\dagger
\end{equation}
whose reference ground state is \(C_r\ket b\). Conjugating the entire
variational circuit by \(C_r\) replaces the cost by \(C_r^\dagger H_CC_r\),
the mixer by \(B_b\), and the initial state by \(\ket b\).

\begin{proposition}[Diagonal-cost family collapse]
\label{prop:diagonal-collapse}
Let \(H_C\) be diagonal in the computational basis and let \(C_r\) have the
form in Eq.~\eqref{eq:Cr-form}. Then
\begin{equation}
  C_r^\dagger H_CC_r
  =H_{\mathrm{Had}}^{\otimes n}H_CH_{\mathrm{Had}}^{\otimes n},
  \label{eq:rotated-cost-collapse}
\end{equation}
which is independent of \(r\). Consequently, every fully matched family has the
same variational landscape as a function of the shared QAOA angles. Moreover,
\begin{equation}
  \bra bC_r^\dagger H_CC_r\ket b=2^{-n}\Tr(H_C)
  \label{eq:prescreen-collapse}
\end{equation}
for every \(b\), so this construction also has no family- or label-dependent
initial-energy prescreen. For \(b=0\), it is unitarily equivalent to ordinary
transverse-\(X\) QAOA.
\end{proposition}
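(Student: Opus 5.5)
The plan is a direct computation in three parts: first the conjugation identity, then the diagonal matrix elements, then the identification with ordinary QAOA.

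\emph{Step 1: the collapse identity.} Write \(C_r^\dagger = H_{\mathrm{Had}}^{\otimes n}D_r^\dagger\), using that \(H_{\mathrm{Had}}^{\otimes n}\) is Hermitian and unitary. Then
\[
C_r^\dagger H_C C_r = H_{\mathrm{Had}}^{\otimes n}\bigl(D_r^\dagger H_C D_r\bigr)H_{\mathrm{Had}}^{\otimes n}.
\]
Since \(H_C\) and \(D_r\) are both diagonal in the computational basis, they commute. Because \(D_r\) is unitary, \(D_r^\dagger H_C D_r = H_C D_r^\dagger D_r = H_C\), which gives Eq.~\eqref{eq:rotated-cost-collapse}.

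\emph{Step 2: the landscape claim.} I would follow the conjugation remark preceding the proposition. Conjugate the full QAOA circuit \(\prod_k e^{-i\beta_k B_{r,b}}e^{-i\gamma_k H_C}\), applied to \(C_r\ket b\), by \(C_r\). Since \(C_r^\dagger e^{-iX}C_r = e^{-iC_r^\dagger X C_r}\), the mixer factors become \(e^{-i\beta_k B_b}\) and the cost factors become \(e^{-i\gamma_k H_{\mathrm{Had}}^{\otimes n}H_CH_{\mathrm{Had}}^{\otimes n}}\), with initial state \(\ket b\). The measured energy \(\bra{\cdot}H_C\ket{\cdot}\) transforms the same way. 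Every ingredient is then \(r\)-independent, so the energy as a function of \((\gamma,\beta)\) is the same for all \(r\).

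\emph{Step 3: the prescreen value, Eq.~\eqref{eq:prescreen-collapse}.} Write \(H_C=\sum_x c_x\ketbra xx\). Since \(H_{\mathrm{Had}}^{\otimes n}\ket b\) has all amplitudes of modulus \(2^{-n/2}\), we get \(\bra b H_{\mathrm{Had}}^{\otimes n}H_CH_{\mathrm{Had}}^{\otimes n}\ket b=\sum_x c_x 2^{-n}=2^{-n}\Tr H_C\).

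\emph{Step 4: equivalence with transverse-\(X\) QAOA for \(b=0\).} Apply one further conjugation by \(H_{\mathrm{Had}}^{\otimes n}\). This returns the cost to \(H_C\), sends \(\ket0\) to \(\ket{+}^{\otimes n}\), and sends the \(Z\)-reference mixer \(B_0\) (e.g.\ \(-\sum_j Z_j\), up to sign and shift conventions) to \(-\sum_j X_j\). The result is standard QAOA, with the identity shift contributing only a global phase.

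The only real obstacle is convention-dependent. It lies in pinning down the precise form of \(B_b\) so that its Hadamard conjugate is exactly the transverse-field mixer, up to sign and additive constants. For \(b\neq0\), a Pauli-\(X\) correction relates the labels, but the proposition claims the equivalence only for \(b=0\). The rest is routine.
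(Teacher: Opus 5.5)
Your proposal is correct and follows the paper's proof essentially step for step. You commute the diagonal $D_r$ past the diagonal $H_C$, evaluate the prescreen value using the flat $2^{-n/2}$ amplitudes of $H_{\mathrm{Had}}^{\otimes n}\ket b$, and for $b=0$ apply one further $H_{\mathrm{Had}}^{\otimes n}$ conjugation; your Step 2 simply spells out the circuit-conjugation remark that the paper states just before the proposition to justify the landscape claim.
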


\begin{proof}
Since both \(D_r\) and \(H_C\) are diagonal, \([D_r,H_C]=0\). Hence
\begin{align}
  C_r^\dagger H_CC_r
  &=H_{\mathrm{Had}}^{\otimes n}D_r^\dagger H_CD_r
    H_{\mathrm{Had}}^{\otimes n} \notag\\
  &=H_{\mathrm{Had}}^{\otimes n}H_CH_{\mathrm{Had}}^{\otimes n},
\end{align}
proving Eq.~\eqref{eq:rotated-cost-collapse}. If
\(H_C=\sum_x c_x\ketbra x x\), every computational amplitude of
\(H_{\mathrm{Had}}^{\otimes n}\ket b\) has squared magnitude \(2^{-n}\), so
\begin{equation}
  \bra bH_{\mathrm{Had}}^{\otimes n}H_CH_{\mathrm{Had}}^{\otimes n}\ket b
  =2^{-n}\sum_xc_x.
\end{equation}
This is Eq.~\eqref{eq:prescreen-collapse}. For \(b=0\), a further conjugation
by \(H_{\mathrm{Had}}^{\otimes n}\) maps \(\ket0^{\otimes n}\) to
\(\ket+^{\otimes n}\), a longitudinal \(Z\) reference mixer to the transverse
\(X\) mixer, and the rotated cost back to \(H_C\).
\end{proof}

The proposition rules out a family-selection mechanism for diagonal costs in
this fully matched setting. A family-dependent practical construction must
therefore change at least one ingredient: the pre-circuit need not be matched
to the mixer, the Hamiltonian may be non-diagonal, or the state-selection rule
may use additional classical information.

\section{Finite-size variational illustrations outside the isotropic model}
\label{sec:variational-illustrations}

The following two studies are retained to show how family dependence can
reappear once the assumptions of Theorem~\ref{thm:main} are broken. They are
not tests of the theorem itself. The complete ansatz definitions, benchmark
grids, tables, figures, and resource accounting are given in
Appendices~\ref{app:xrot-full} and \ref{app:qrao-full}.

\subsection{Composite adaptive MUB-XRot warm start}
\label{sec:mub-xrot}

Adaptive MUB-XRot uses an unmatched MUB-family pre-circuit, \(n\) additional
local \(X\)-rotation parameters, ordinary shared-angle QAOA layers, and a local
discrete family search. The baseline optimizes only the \(2p\) shared QAOA
angles. The available benchmark therefore does not isolate MUB geometry from
additional expressivity or optimization budget, and it contains no XRot-only
or equally parameterized generic pre-circuit control.

Across the retained finite-size benchmark, the mean decoded-ratio changes are
heterogeneous: \(+0.0100\) for MaxCut, \(+0.0366\) for weighted MaxCut,
\(+0.3665\) for MIS, and \(+0.2792\) for weighted MIS. The original aggregate
mean \(+0.1616\) also includes a secondary knapsack implementation with an
unresolved feasibility/slack-consistency issue and treats repeated circuit
depths of the same instance as separate descriptive rows. The overall median
runtime ratio is approximately \(46\times\). These observations support only
the composite warm-start procedure under a substantially larger classical and
variational budget.

\subsection{Local MUB-family search in non-diagonal QRAO}
\label{sec:qrao}

The QRAO study uses a non-diagonal \((3,1)\)-QRAC relaxation, so the diagonal
collapse in Proposition~\ref{prop:diagonal-collapse} does not apply. Its
bit-flip multi-start method attains mean relaxed approximation ratio
\(\alpha_r=0.9211\), a mean change \(+0.0608\) relative to the
\(X\)-variational baseline, and decoded solved rates of \(61.9\%\) versus
\(46.9\%\). The improvement is not uniform: the method loses in relaxed ratio
on 72 of 360 graph--depth cells.

The attribution decomposition is essential:
\begin{equation}
  0.0608=0.0462+0.0039+0.0107,
  \label{eq:qrao-decomposition-main}
\end{equation}
where the three terms are respectively due to a classical label prescreen,
comparison of two initial family poles, and the Hamming-neighbor family walk.
The median runtime ratio relative to the \(X\)-variational baseline is
approximately \(9.2\times\), and no asymptotic bound is proved for the family
search. Thus the result concerns a composite classical--variational procedure
for a structured relaxation, not an isolated MUB-family effect or a scaling
advantage.

\section{Discussion}
\label{sec:discussion}

\subsection{Extremality beyond projective-design identities}

The comparison in Theorem~\ref{thm:main} holds because the basis constraint
fixes every marginal block law. Each orthonormal basis is represented by the
same centered regular simplex, while the transition probabilities between
bases determine the admissible Gaussian coupling of those simplex blocks. A
complete MUB system sets all cross-block covariances to zero. The
centered-convex correlation inequality then orders the entire distribution of
the maximum, rather than only its mean or variance.

This mechanism explains why the theorem is not a reformulation of the
projective \(2\)-design property. Design identities average tensor powers of
individual projectors. The maximum \(M_{\mathcal S}\) instead depends on the
joint covariance matrix of all labeled probe values. In the Gaussian model that
matrix determines the complete law, and the MUB transition probabilities
select the independent-block coupling that is extremal for the relevant
simplex sublevel sets.

\subsection{Structured and anisotropic Hamiltonian ensembles}

Locality, sparse Pauli support, constraints, and encodings introduce preferred
directions in operator space. For such ensembles the covariance of
\(\Tr(HQ_\psi)\) is no longer a universal function only of
\(\abs{\langle\psi\vert\phi\rangle}^2\), and mutual unbiasedness need not
remove cross-family dependence. Proposition~\ref{prop:diagonal-collapse}
shows that this difference can be qualitative: in a fully matched diagonal
construction, all MUB-family dependence disappears exactly. Conversely, the
finite-size QRAO example exhibits family dependence only after a non-diagonal
anisotropic encoding and classical prescreen are introduced.

The appropriate extension of the present theorem is therefore not a generic
claim that MUBs improve structured variational algorithms. It is to identify
anisotropic quantum-observable ensembles whose induced block covariances obey a
comparison principle strong enough to order their maxima. Companion work on a
product MUB family in QRAO studies one such structured setting directly rather
than deriving it from isotropy \cite{SemreFrankel2026RandSEMI}.

\subsection{Projective state designs and variational trainability}
\label{sec:trainability}

A complete MUB ensemble is a projective state \(2\)-design, not a unitary
\(2\)-design. Barren-plateau results concern the distribution of parameterized
unitaries or circuit subblocks, together with the cost observable, depth, and
parameter distribution \cite{McCleanEtAl2018BarrenPlateaus,CerezoEtAl2021CostDependent}.
The MUB state-design property therefore implies neither the presence nor the
absence of a barren plateau. The finite-size studies in
Section~\ref{sec:variational-illustrations} do not measure asymptotic gradient
variance, shot complexity, or noise scaling and provide no trainability result.

\section{Conclusion}
\label{sec:conclusion}

Among labeled unions of \(d+1\) orthonormal bases in \(\C^d\), whenever a
complete MUB system exists, its union is a stochastic-extremal finite probe of
a standard isotropic Gaussian traceless Hamiltonian. Every basis induces the
same regular-simplex Gaussian block; mutual unbiasedness eliminates every
cross-block covariance; joint Gaussianity turns this decorrelation into
independence; and a centered-convex Gaussian correlation inequality shows that
the resulting independent blocks produce the stochastically largest maximum.
The expected Gaussian-width inequality follows immediately.

This conclusion is not contained in the known projective \(2\)-design identity
for complete MUBs. The design property fixes low-order averaged moments,
whereas the theorem orders a nonlinear extreme-value functional through the
full covariance structure of the induced finite Gaussian process. Its domain is
correspondingly precise: unions of \(d+1\) orthonormal bases and an isotropic
Gaussian traceless-Hamiltonian field.

Structured Hamiltonians require a different analysis. The exact diagonal
family-collapse proposition shows that a fully matched MUB construction can
lose all family dependence, while the retained variational examples obtain
their effects from unmatched circuits, anisotropy, decoding, prescreening, and
additional optimization resources. The most direct theoretical continuation is
to determine which structured or anisotropic quantum-Hamiltonian ensembles
admit an analogous covariance-extremality principle and whether that principle
can be related rigorously to operational variational objectives.

\clearpage
\appendix

\section{Normalization of the Gaussian extreme-value scale}
\label{app:rate}

Let \(\mathcal S\) be any list of \(N\) pure states. Since
\(\Var(X_\psi)=1-1/d\le1\), the standard exponential-moment bound gives
\begin{equation}
  W(\mathcal S)=\E\max_{\psi\in\mathcal S}X_\psi
  \le\sqrt{2\log N}.
  \label{eq:maximal-upper}
\end{equation}
For a complete MUB ensemble, distinct states satisfy either
\(\abs{\langle\psi\vert\phi\rangle}^2=0\) within a basis or
\(1/d\) across bases. Equation~\eqref{eq:canonical-metric} therefore gives
\begin{equation}
  d_H(\psi,\phi)^2\in\left\{2,\,2\left(1-\frac1d\right)\right\},
\end{equation}
so the minimum pairwise distance is at least \(1\) for \(d\ge2\). Sudakov
minoration then yields a universal \(c>0\) such that
\begin{equation}
  W(\mathcal M)\ge c\sqrt{\log N}.
  \label{eq:sudakov-lower}
\end{equation}
Together, Eqs.~\eqref{eq:maximal-upper} and \eqref{eq:sudakov-lower} prove
Eq.~\eqref{eq:rate-scale}. No claim is made about the sharp constants.

\section{Unrestricted six-state extremality for a qubit}
\label{app:qubit}

For \(d=2\), write
\begin{equation}
  Q_r=\frac12 r\cdot\bm\sigma,
  \qquad r\in S^2,
\end{equation}
and expand a standard Gaussian traceless Hamiltonian as
\begin{equation}
  H=\frac1{\sqrt2}g\cdot\bm\sigma,
  \qquad g\sim N(0,I_3).
\end{equation}
Then
\begin{equation}
  \Tr(HQ_r)=\frac1{\sqrt2}g\cdot r.
  \label{eq:bloch-gaussian}
\end{equation}
For six states with Bloch vectors \(r_1,\ldots,r_6\), let
\(P=\conv\{r_1,\ldots,r_6\}\). Writing \(g=Ru\), where \(u\) is uniform on
\(S^2\) and independent of \(R=\norm g\), gives
\begin{equation}
  W=\frac{\E R}{\sqrt2}\int_{S^2}h_P(u)\,\dd\sigma(u),
  \label{eq:width-mean-support}
\end{equation}
with \(\sigma\) normalized surface measure. Thus maximizing \(W\) is equivalent
to maximizing the mean width of a convex polytope with at most six vertices and
circumradius at most one.

For a full-dimensional convex polyhedron, the Cauchy--Hadwiger relation
identifies mean width with the total edge curvature up to a universal positive
factor \cite{Schneider2014ConvexBodies}. Let \(v,e,f\) denote the numbers of
vertices, edges, and faces, and let \(R_P\) be the circumradius. Linhart's sharp
inequality \cite{Linhart1979EdgeCurvature} bounds the total edge curvature
\(M(P)\) by
\begin{align}
 \frac{M(P)}{R_P}
 &\le 2e\sin\!\left(\frac{\pi f}{2e}\right)
 \left[1-\cot^2\!\left(\frac{\pi f}{2e}\right)
          \cot^2\!\left(\frac{\pi v}{2e}\right)\right]^{1/2}
 \notag\\
 &\quad\times
 \arccos\!\left[
   \cos\!\left(\frac{\pi v}{2e}\right)
   \csc\!\left(\frac{\pi f}{2e}\right)
 \right],
 \label{eq:linhart}
\end{align}
with equality precisely for the Platonic solids.

Euler's relation and the planar graph inequalities leave the following
full-dimensional combinatorial possibilities for \(v\le6\):
\begin{table}[H]
\caption{Linhart upper bound in Eq.~\eqref{eq:linhart} at circumradius one for
all full-dimensional combinatorial triples with at most six vertices.}
\label{tab:linhart-check}
\begin{ruledtabular}
\begin{tabular}{cccc}
\(v\) & \(e\) & \(f\) & Upper bound \\
\hline
4 & 6 & 4 & 9.360 \\
5 & 8 & 5 & 9.988 \\
5 & 9 & 6 & 10.015 \\
6 & 9 & 5 & 10.370 \\
6 & 10 & 6 & 10.410 \\
6 & 11 & 7 & 10.432 \\
6 & 12 & 8 & 10.445 \\
\end{tabular}
\end{ruledtabular}
\end{table}
The largest bound occurs for \((v,e,f)=(6,12,8)\), and equality requires the
regular octahedron. Lower-dimensional hulls follow by a perturbation within
\(S^2\) and continuity of support functions in the Hausdorff metric. Therefore
the six Bloch vectors \(\{\pm e_1,\pm e_2,\pm e_3\}\), which are exactly the
three qubit MUBs, maximize Eq.~\eqref{eq:width-mean-support} among arbitrary
six-state lists. Repeated or coincident states are included by the same
continuity argument.

\clearpage
\section{Adaptive MUB-XRot benchmark and resource accounting}
\label{app:xrot-full}

\subsection{Ansatz and resource asymmetry}

The adaptive MUB-XRot study uses an unmatched warm-start ansatz rather than the
fully matched construction of Sec.~\ref{sec:diagonal-collapse}. Standard
shared-angle depth-\(p\) QAOA begins from \(\ket+^{\otimes n}\) and optimizes
\(2p\) continuous angles. Adaptive MUB-XRot prepares
\begin{equation}
  \ket{\phi_{j,\mu}}
  =F_j\left(\prod_{i=1}^n R_X(\mu_i)\right)\ket0^{\otimes n},
  \label{eq:mub-xrot-state}
\end{equation}
then applies ordinary QAOA cost and transverse-mixer layers. Here \(F_j\) is a
noncomputational MUB-family circuit implemented using the chosen complete-MUB
construction \cite{WangWu2024MUBCircuits}. The mixer is not chosen to make
\(\ket{\phi_{j,\mu}}\) its ground state.

The comparison is not parameter matched. Adaptive MUB-XRot optimizes
\(n+2p\) continuous parameters and a discrete family variable, whereas the
baseline optimizes \(2p\) shared QAOA angles. The available benchmark contains
no XRot-only, fixed-MUB-plus-XRot, equally parameterized generic local-rotation,
or comparable random structured pre-circuit control. An observed improvement
therefore belongs to the composite adaptive MUB-XRot ansatz; it cannot be
attributed specifically to MUB geometry.

\begin{figure}[H]
  \centering
  \paperfig{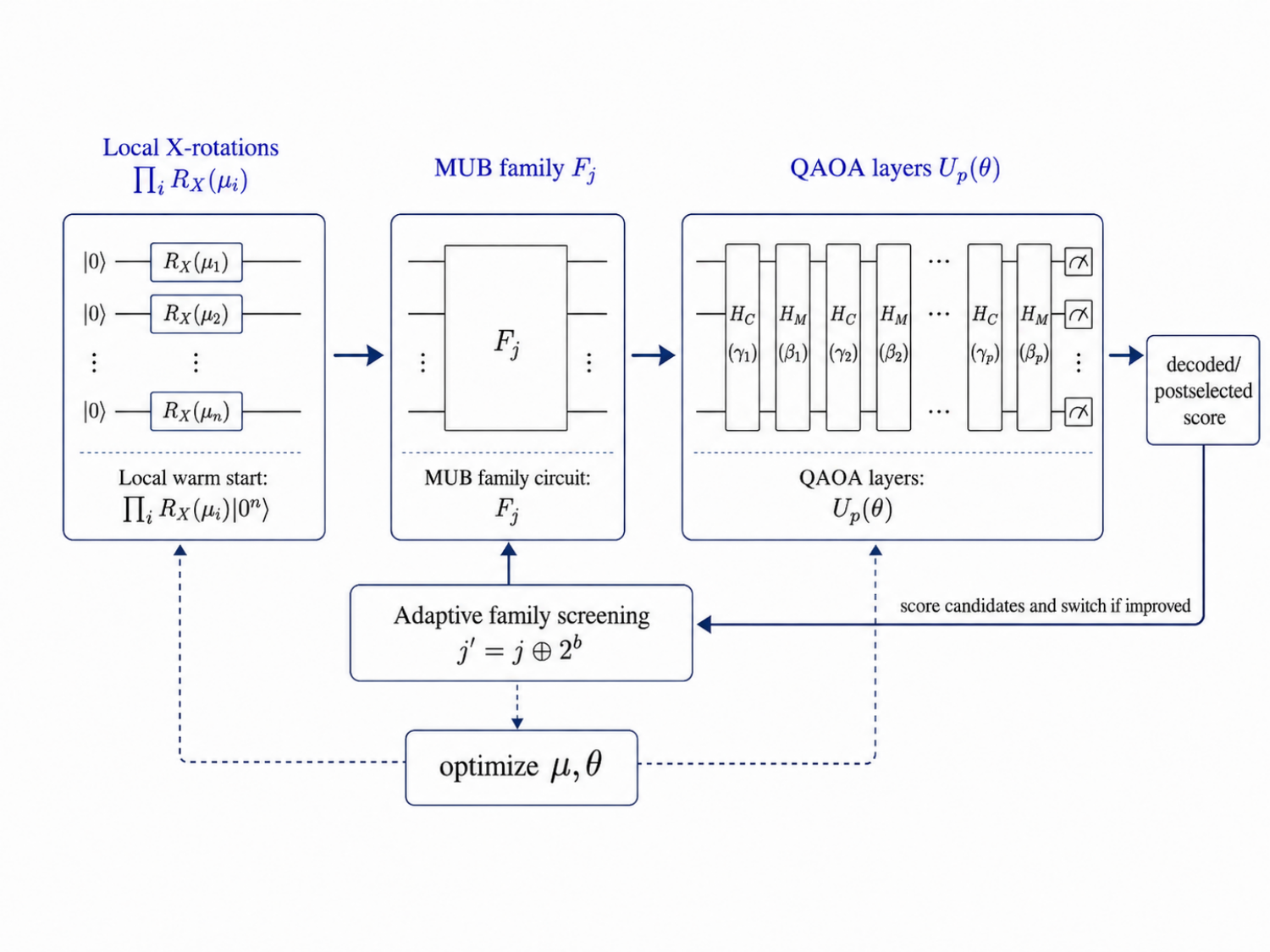}{0.72\textwidth}
  \caption{Composite adaptive MUB-XRot warm start. Local rotations and a selected
  MUB-family circuit prepare Eq.~\eqref{eq:mub-xrot-state}; ordinary shared-angle
  QAOA layers then follow. Hamming-neighbor families are screened and selected
  during the classical outer loop. The method adds \(n\) continuous parameters
  and a discrete family search relative to standard QAOA, so the comparison is
  not an equal-resource test of MUB structure alone.}
  \label{fig:mub-xrot-method}
\end{figure}

The family index starts from a fixed noncomputational family. The algorithm
screens the current family and valid bit-flip neighbors \(j\oplus2^b\), ranks
candidates using postselected expected ratio, decoded ratio, and energy, and
reoptimizes the leading candidates under an equal small budget. A switch is
accepted only when the best candidate improves the screening objective by at
least \(10^{-4}\). Continuous optimization uses L-BFGS-B
\cite{ByrdLuNocedalZhu1995}. This procedure is a local discrete search over
families, not an exhaustive scan of all MUB states.

\subsection{Metrics and descriptive benchmark}

The primary reported metric is the decoded solution ratio: the objective value
of the highest-probability feasible decoded bit string divided by the
brute-force optimum. Infeasible strings receive zero decoded score for the
constrained problems. A tie is defined computationally by
\(\abs{\Delta}\le10^{-9}\), where \(\Delta\) is the paired difference in
decoded ratio. This threshold identifies numerical equality; it is not a
statistical equivalence margin.

The source benchmark uses
\begin{equation}
  n\in\{8,10,12,14\},\qquad
  p\in\{1,2,3\},\qquad
  25\text{ seeds},
  \label{eq:xrot-grid}
\end{equation}
for MaxCut, weighted MaxCut, maximum independent set (MIS), weighted MIS, and
knapsack. There are 500 underlying instances and 1500 paired instance--depth
rows. The same instance appears at multiple depths, so the 1500 rows are not
independent experimental units. In the absence of raw clustered data, all
summaries below are descriptive; no row-wise significance test is reported.

The mean decoded-ratio changes are strongly heterogeneous:
\begin{table}[H]
\caption{Mean paired change in decoded solution ratio for the composite
adaptive MUB-XRot ansatz relative to standard shared-angle QAOA. The knapsack
entry is secondary because the current implementation has an unresolved
feasibility/slack-consistency issue.}
\label{tab:xrot-by-problem}
\begin{ruledtabular}
\begin{tabular}{lc}
Problem family & Mean change \\
\hline
MaxCut & \(+0.0100\) \\
Weighted MaxCut & \(+0.0366\) \\
MIS & \(+0.3665\) \\
Weighted MIS & \(+0.2792\) \\
Knapsack (secondary) & \(+0.1158\) \\
\end{tabular}
\end{ruledtabular}
\end{table}
The aggregate is therefore driven primarily by the two MIS families, while
unweighted MaxCut has a ceiling effect: the standard baseline has mean decoded
ratio \(0.970\), mean headroom \(0.030\), and solved rate \(81.0\%\) on that
subset.

For transparency, the original five-family aggregate contains 1500 paired
rows, with win/tie/loss counts \(829/371/300\), an \(80.0\%\) non-worse rate,
mean decoded-ratio change \(+0.1616\), and solved rates \(28.0\%\) and
\(43.8\%\) for the standard and adaptive methods, respectively. Because this
aggregate includes the unresolved knapsack rows and repeated depths, it is not
treated as the central empirical claim. Appendix~\ref{app:xrot-full} records
the source-of-record table and figures.

The resource difference is substantial. Median adaptive-to-standard runtime
ratios range from approximately \(37\times\) to \(48\times\) by problem, with
an overall median near \(46\times\). The overhead arises from the additional
continuous parameters, numerical differentiation, repeated continuous
optimization, and family screening. The evidence therefore concerns solution
quality under a much larger variational and classical-search budget, not a
runtime-efficient replacement for standard QAOA.

\begin{figure}[H]
  \centering
  \paperfig{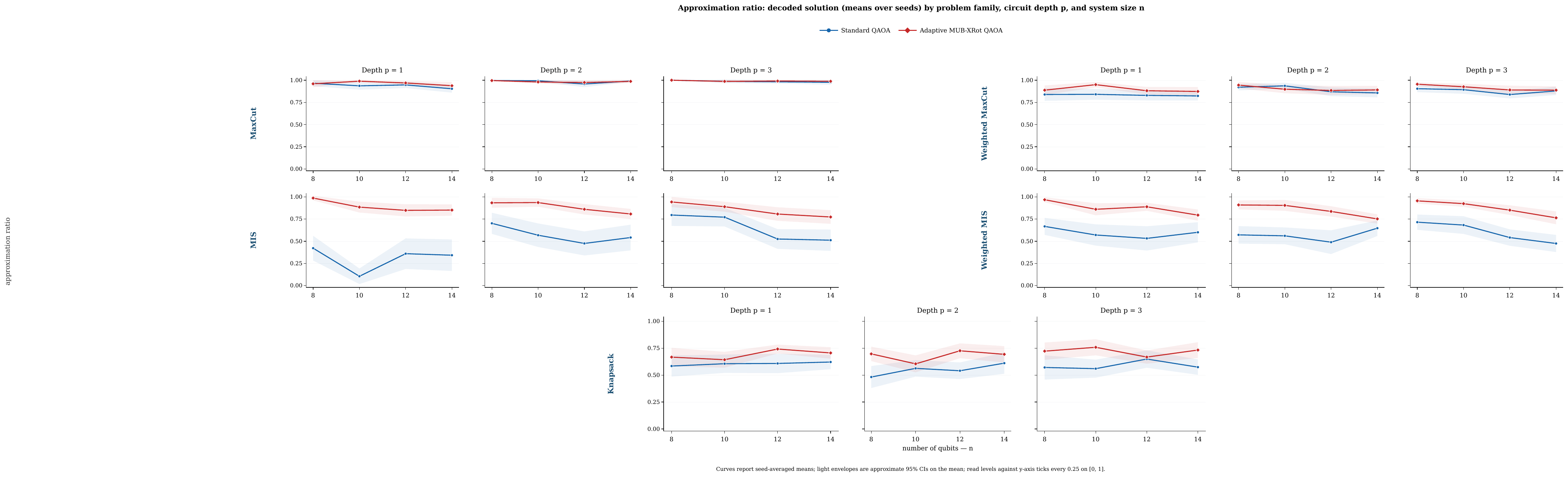}{0.98\textwidth}
  \caption{Decoded solution-ratio facets from the original five-family
  benchmark. The figure is descriptive. The large aggregate change is driven
  primarily by MIS and weighted MIS; the knapsack panel is secondary pending
  resolution of its feasibility/slack-consistency issue.}
  \label{fig:xrot-facets}
\end{figure}

\subsection{Complete benchmark record}

The full source-of-record benchmark contains five problem families, four sizes,
three depths, and 25 seeds as specified in Eq.~\eqref{eq:xrot-grid}. Each method
uses two optimizer restarts. Paired rows are defined at fixed problem, size,
depth, and seed.

\begin{table}[H]
\caption{Original five-family aggregate for adaptive MUB-XRot relative to
standard QAOA. These values are retained without recomputation. The aggregate
is descriptive, includes repeated depths, and includes the secondary knapsack
rows discussed in Sec.~\ref{sec:mub-xrot}.}
\label{tab:xrot-aggregate}
\begin{ruledtabular}
\begin{tabular}{lc}
Quantity & Value \\
\hline
Underlying problem instances & 500 \\
Paired instance--depth rows & 1500 \\
Win/tie/loss & 829/371/300 \\
Non-worse rate & 80.0\% \\
Win rate among non-ties & 73.4\% \\
Mean decoded-ratio change & \(+0.1616\) \\
Solved rate, standard & 28.0\% \\
Solved rate, adaptive & 43.8\% \\
Median runtime ratio & \(\approx46\times\) \\
\end{tabular}
\end{ruledtabular}
\end{table}

\begin{figure}[H]
  \centering
  \paperfig{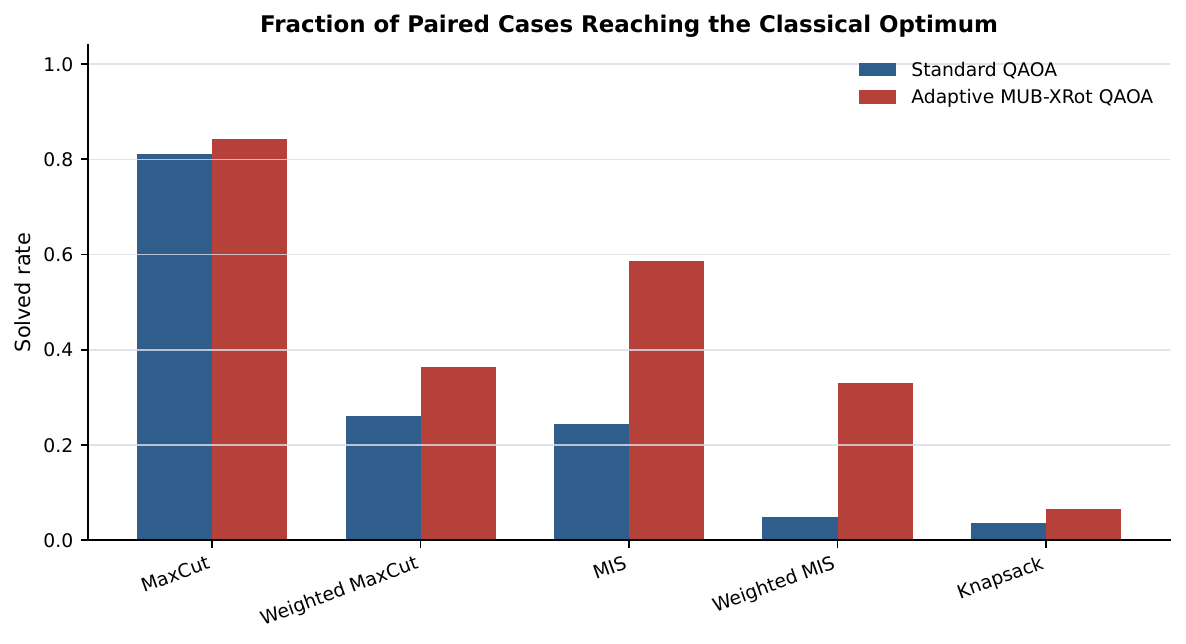}{\columnwidth}
  \caption{Solved rate by problem family in the original benchmark. A row is
  counted as solved when the decoded objective reaches the brute-force optimum.
  The figure does not adjust for repeated depths of the same instance.}
  \label{fig:xrot-solved}
\end{figure}

Candidate families are ranked by a screening objective that includes
postselected expected ratio, decoded ratio, and energy. Since the principal
reported outcome is decoded ratio, screening and evaluation objectives can
disagree. A cluster-aware analysis should treat all depths of an underlying
instance as one resampling cluster. No confidence intervals or \(p\)-values are
reported here because the raw paired rows are not part of the manuscript
source.

\clearpage
\section{Local MUB-family search in non-diagonal QRAO}
\label{app:qrao-full}

This application study concerns MaxCut under a \((3,1)\)-QRAC quantum
relaxation. The relaxed Hamiltonian contains a mixture of Pauli directions and
is generally non-diagonal \cite{AmbainisEtAl2002,TeramotoEtAl2023,FullerEtAl2024QuantumRelaxations}.
Consequently, changing the MUB-family circuit need not commute away as it does
in Proposition~\ref{prop:diagonal-collapse}.

Two objectives must be kept separate. The relaxed QRAO approximation ratio
\(\alpha_r\) evaluates the optimized relaxed Hamiltonian objective. The decoded
classical approximation ratio \(\alpha_c\) evaluates the classical cut obtained
after decoding. An improvement in \(\alpha_r\) is not, by itself, an equal
improvement in \(\alpha_c\).

The local search represents the family index \(r\) as a bit string and probes
valid Hamming-one neighbors
\begin{equation}
  r'=r\oplus2^k.
  \label{eq:qrao-neighbor}
\end{equation}
Its strongest reported variant starts local searches at the two poles
\(r=1\) and \(r=2^{n_q}-1\), retains the better local optimum, and is called
bit-flip multi-start 2POLE. This is a reduced-cost local family search; no
asymptotic bound is proved for the number of family evaluations, and it has no
guarantee of finding the exhaustive best family.

The benchmark uses Erd\H{o}s--R\'{e}nyi \(G(n,0.5)\) MaxCut instances
\cite{ErdosRenyi1960}, with
\begin{equation}
  n\in\{6,8,10,12\},\qquad
  p\in\{1,2,3\},\qquad
  30\text{ seeds}.
  \label{eq:qrao-grid}
\end{equation}
Eight strategies are evaluated on 360 paired graph--depth cells, for 2880
method evaluations. As in the previous study, repeated depths for a graph make
these descriptive paired-cell summaries rather than independent-sample
inference.

\begin{figure}[H]
  \centering
  \paperfig{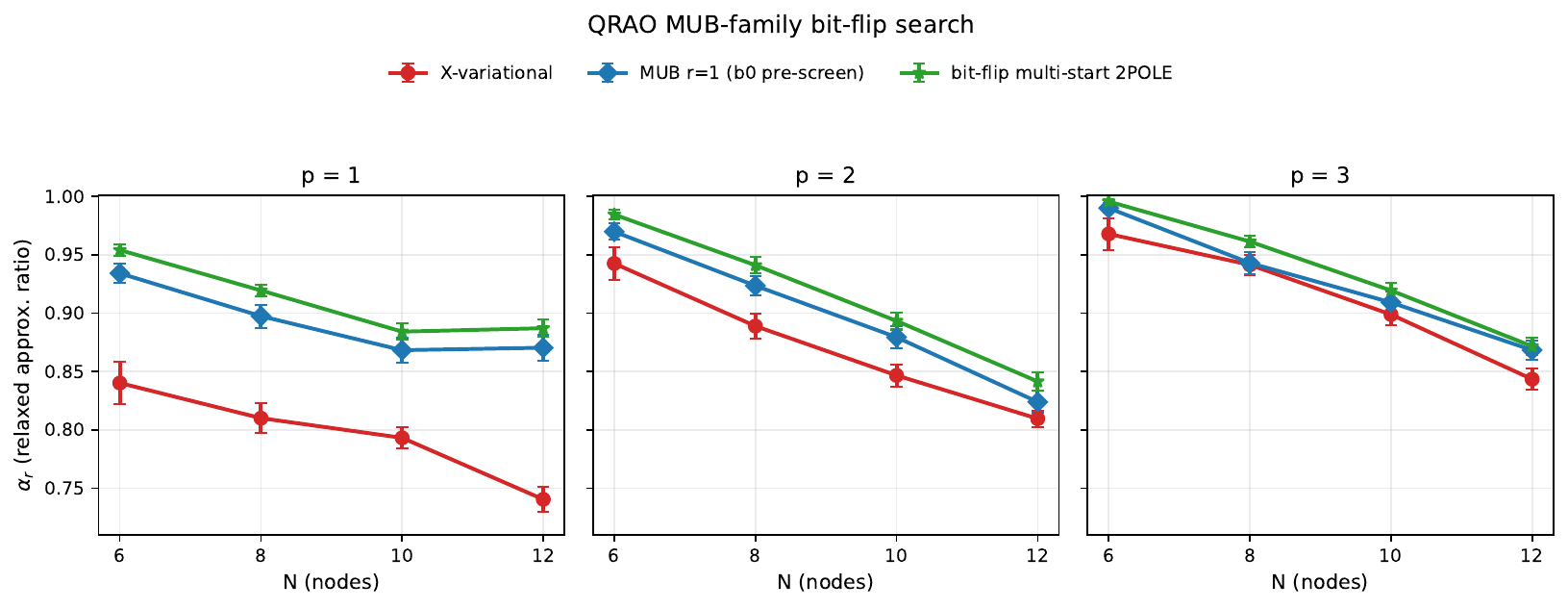}{0.88\textwidth}
  \caption{Local MUB-family search in QRAO MaxCut. The displayed metric is the
  relaxed approximation ratio \(\alpha_r\), not the decoded classical ratio.
  The comparison includes the \(X\)-variational baseline, fixed family \(r=1\)
  with a classical label prescreen, and bit-flip multi-start 2POLE.}
  \label{fig:qrao-main}
\end{figure}

\begin{table}[H]
\caption{Descriptive QRAO comparison on 360 paired graph--depth cells. Solved
rates use the decoded classical ratio \(\alpha_c\), whereas all reported
\(\Delta\alpha_r\) values use the relaxed ratio.}
\label{tab:qrao-main}
\begin{ruledtabular}
\begin{tabular}{lc}
Quantity & Value \\
\hline
Win/tie/loss in \(\alpha_r\) versus \(X\) & 268/20/72 \\
Non-worse rate in \(\alpha_r\) versus \(X\) & 80.0\% \\
Mean \(\Delta\alpha_r\) versus \(X\) & \(+0.0608\) \\
Mean \(\alpha_r\), local multi-start & 0.9211 \\
Mean \(\Delta\alpha_r\) versus fixed \(r=1\) & \(+0.015\) \\
Mean \(\Delta\alpha_r\) versus 2POLE & \(+0.011\) \\
Solved rate from \(\alpha_c\), \(X\) & 46.9\% \\
Solved rate from \(\alpha_c\), local multi-start & 61.9\% \\
\end{tabular}
\end{ruledtabular}
\end{table}

The bit-flip multi-start method has mean \(\alpha_r=0.9211\) and mean gain
\(+0.0608\) over the \(X\)-variational QRAO baseline. It loses to that
baseline in 72 of 360 cells, so the gain is not uniform. The depth dependence
is consistent with a finite-depth basin-finding effect: the reported relaxed
ratio gain is approximately \(+0.09\) to \(+0.15\) at \(p=1\), depending on
size, and narrows to approximately \(+0.02\) to \(+0.03\) at \(p=3\).

The number of family evaluations grows from means \(7.2,10.0,13.8,16.5\) for
\(n=6,8,10,12\), respectively. The corresponding row-averaged exhaustive
counts are \(8.9,14.2,26.2,37.4\). These data show an empirical reduction in
family evaluations over the tested range, not an asymptotic scalability result.
The median runtime ratio relative to the \(X\)-variational QRAO baseline is
approximately \(9.2\times\).

Most of the observed mean gain is not generated by the local family walk. The
reported decomposition is
\begin{equation}
  0.0608=0.0462+0.0039+0.0107,
  \label{eq:qrao-decomposition}
\end{equation}
where \(+0.0462\) comes from the classical \(b_0\) label prescreen,
\(+0.0039\) from comparing the two poles, and \(+0.0107\) from the bit-flip
local search. Thus the experiment supports a composite classical--variational
selection procedure for a structured non-diagonal relaxation. It does not show
that MUB-family circuit dynamics alone produces the full \(+0.0608\) change.

\subsection{Exhaustive-family diagnostic}

An exhaustive sweep over tested QRAO MUB families supplies a diagnostic upper
reference rather than a deployable algorithm. Across 688 graph--depth cases and
13,920 optimization rows, the tested-family oracle improves decoded MaxCut
ratio over the \(X\)-variational baseline by mean \(+0.0811\), with
win/tie/loss counts \(312/373/3\). Relative to a \(Z\)-variational baseline,
the mean decoded-ratio change is \(+0.0552\), with counts \(234/452/2\). A
simple 2POLE heuristic has mean heuristic-to-oracle ratio \(0.9894\). These
numbers motivate the local-search study but do not include the oracle's family
enumeration cost in a practical resource comparison.

\clearpage
\section{Secondary variational checks}
\label{app:secondary}

\subsection{Transverse-field Ising VQE initialization check}

MUB starts were also tested for VQE \cite{PeruzzoEtAl2014,CerezoEtAl2021} on
the open-chain transverse-field Ising Hamiltonian \cite{Pfeuty1970}
\begin{equation}
  H_{\mathrm{TFIM}}=-\sum_i Z_iZ_{i+1}-h\sum_iX_i.
\end{equation}
For exhaustive \(n=5\) sweeps, the best MUB start improves the final energy gap
relative to the best computational-basis start by \(0.9292\), \(0.5215\), and
\(0.4570\) at \(h=0.5,1.0,1.5\), respectively. Sampled \(n=8\) results are
unfavorable. This mixed outcome is retained as a finite-size initialization
check and gives no evidence of favorable scaling.

\subsection{Preliminary non-diagonal matched-MUB-pair experiment}

A matched-MUB-pair prototype was tested for MIS with
\begin{equation}
  H_C^{\mathrm{ND}}
  =\frac{\Omega}{2}\sum_iX_i
   +V\sum_{(i,j)\in E}n_in_j
   -\Delta\sum_i n_i,
  \qquad n_i=\frac{\Id-Z_i}{2},
\end{equation}
a non-diagonal Hamiltonian related to Hamiltonian and Rydberg-array approaches
to MIS \cite{WuYuWilczek2020,EbadiEtAl2022,ZhaoEtAl2025}. On paired
\(n=8,10,12\) Erd\H{o}s--R\'{e}nyi graph--depth cases, adaptive pair selection
has mean decoded-MIS change \(+0.1916\) relative to a fixed non-diagonal
matched baseline, with counts \(127/89/51\). Relative to standard diagonal
QAOA, however, the mean change is only \(+0.0079\), with counts \(94/76/97\),
and the runtime overhead is high. The experiment is preliminary and is not used
to support the main stochastic-extremality theorem or a general variational claim.

\clearpage

\bibliographystyle{unsrtnat}
\bibliography{references}

@article{WoottersFields1989,
  author  = {Wootters, William K. and Fields, Brian D.},
  title   = {Optimal State-Determination by Mutually Unbiased Measurements},
  journal = {Annals of Physics},
  year    = {1989},
  volume  = {191},
  number  = {2},
  pages   = {363--381},
  doi     = {10.1016/0003-4916(89)90322-9}
}

@article{BandyopadhyayEtAl2002,
  author        = {Bandyopadhyay, Somshubhro and Boykin, P. Oscar and Roychowdhury, Vwani and Vatan, Farrokh},
  title         = {A New Proof for the Existence of Mutually Unbiased Bases},
  journal       = {Algorithmica},
  year          = {2002},
  volume        = {34},
  number        = {4},
  pages         = {512--528},
  doi           = {10.1007/s00453-002-0980-7},
  eprint        = {quant-ph/0103162},
  archivePrefix = {arXiv}
}

@inproceedings{KlappeneckerRoetteler2005,
  author        = {Klappenecker, Andreas and R{\"o}tteler, Martin},
  title         = {Mutually Unbiased Bases are Complex Projective 2-Designs},
  booktitle     = {Proceedings of the 2005 IEEE International Symposium on Information Theory},
  year          = {2005},
  pages         = {1740--1744},
  publisher     = {IEEE},
  doi           = {10.1109/ISIT.2005.1523643},
  eprint        = {quant-ph/0502031},
  archivePrefix = {arXiv}
}

@article{RoyScott2007,
  author        = {Roy, Aidan and Scott, A. J.},
  title         = {Weighted Complex Projective 2-Designs from Bases: Optimal State Determination by Orthogonal Measurements},
  journal       = {Journal of Mathematical Physics},
  year          = {2007},
  volume        = {48},
  number        = {7},
  pages         = {072110},
  doi           = {10.1063/1.2759449},
  eprint        = {quant-ph/0703025},
  archivePrefix = {arXiv}
}

@misc{WangWu2024MUBCircuits,
    author        = {Wang, Yu and Wu, Dongsheng},
    title         = {An Efficient Quantum Circuit Construction Method for Mutually Unbiased Bases in \(n\)-Qubit Systems},
    year          = {2024},
    eprint        = {2311.11698},
    archivePrefix = {arXiv},
    primaryClass  = {quant-ph},
    doi           = {10.48550/arXiv.2311.11698},
  note          = {arXiv preprint arXiv:2311.11698},
  url           = {https://doi.org/10.48550/arXiv.2311.11698}
}

@article{PeruzzoEtAl2014,
  author  = {Peruzzo, Alberto and McClean, Jarrod and Shadbolt, Peter and Yung, Man-Hong and Zhou, Xiao-Qi and Love, Peter J. and Aspuru-Guzik, Al{\'a}n and O'Brien, Jeremy L.},
  title   = {A Variational Eigenvalue Solver on a Photonic Quantum Processor},
  journal = {Nature Communications},
  year    = {2014},
  volume  = {5},
  pages   = {4213},
  doi     = {10.1038/ncomms5213}
}

@article{CerezoEtAl2021,
  author  = {Cerezo, M. and Arrasmith, Andrew and Babbush, Ryan and Benjamin, Simon C. and Endo, Suguru and Fujii, Keisuke and McClean, Jarrod R. and Mitarai, Kosuke and Yuan, Xiao and Cincio, Lukasz and Coles, Patrick J.},
  title   = {Variational Quantum Algorithms},
  journal = {Nature Reviews Physics},
  year    = {2021},
  volume  = {3},
  pages   = {625--644},
  doi     = {10.1038/s42254-021-00348-9}
}

@misc{TeramotoEtAl2023,
  author        = {Teramoto, Kosei and Raymond, Rudy and Wakakuwa, Eyuri and Imai, Hiroshi},
  title         = {Quantum-Relaxation Based Optimization Algorithms: Theoretical Extensions},
  year          = {2023},
  eprint        = {2302.09481},
  archivePrefix = {arXiv},
  primaryClass  = {quant-ph},
  doi           = {10.48550/arXiv.2302.09481},
  note          = {arXiv preprint arXiv:2302.09481},
  url           = {https://doi.org/10.48550/arXiv.2302.09481}
}

@article{Pfeuty1970,
  author  = {Pfeuty, Pierre},
  title   = {The One-Dimensional {Ising} Model with a Transverse Field},
  journal = {Annals of Physics},
  year    = {1970},
  volume  = {57},
  number  = {1},
  pages   = {79--90},
  doi     = {10.1016/0003-4916(70)90270-8}
}

@article{EbadiEtAl2022,
  author  = {Ebadi, Sepehr and Keesling, Alexander and Cain, Madelyn and Wang, Tout T. and Levine, Harry and Bluvstein, Dolev and Semeghini, Giulia and Omran, Ahmed and Liu, Jinguo and Samajdar, Rhine and Pichler, Hannes and Choi, Soonwon and Irani, Sandy and Greiner, Markus and Vuleti{\'c}, Vladan and Lukin, Mikhail D.},
  title   = {Quantum Optimization of Maximum Independent Set Using {Rydberg} Atom Arrays},
  journal = {Science},
  year    = {2022},
  volume  = {376},
  number  = {6598},
  pages   = {1209--1215},
  doi     = {10.1126/science.abo6587}
}

@article{ZhaoEtAl2025,
  author  = {Zhao, Xianjue and Ge, Peiyun and Yu, Hongye and You, Li and Wilczek, Frank and Wu, Biao},
  title   = {Quantum Hamiltonian Algorithms for Maximum Independent Sets},
  journal = {National Science Review},
  year    = {2025},
  volume  = {12},
  number  = {9},
  pages   = {nwaf304},
  doi     = {10.1093/nsr/nwaf304},
  eprint  = {2310.14546},
  archivePrefix = {arXiv},
  primaryClass = {quant-ph}
}

@article{WuYuWilczek2020,
  author  = {Wu, Biao and Yu, Hongye and Wilczek, Frank},
  title   = {Quantum Independent-Set Problem and {Non-Abelian} Adiabatic Mixing},
  journal = {Physical Review A},
  year    = {2020},
  volume  = {101},
  pages   = {012318},
  doi     = {10.1103/PhysRevA.101.012318}
}

@book{BoucheronLugosiMassart2013,
  author    = {Boucheron, St{\'e}phane and Lugosi, G{\'a}bor and Massart, Pascal},
  title     = {Concentration Inequalities: A Nonasymptotic Theory of Independence},
  publisher = {Oxford University Press},
  year      = {2013},
  isbn      = {9780199535255}
}

@book{LedouxTalagrand1991,
  author    = {Ledoux, Michel and Talagrand, Michel},
  title     = {Probability in Banach Spaces: Isoperimetry and Processes},
  publisher = {Springer},
  year      = {1991},
  series    = {Ergebnisse der Mathematik und ihrer Grenzgebiete},
  volume    = {23},
  isbn      = {9783540520139}
}

@article{ByrdLuNocedalZhu1995,
  author  = {Byrd, Richard H. and Lu, Peihuang and Nocedal, Jorge and Zhu, Ciyou},
  title   = {A Limited Memory Algorithm for Bound Constrained Optimization},
  journal = {SIAM Journal on Scientific Computing},
  year    = {1995},
  volume  = {16},
  number  = {5},
  pages   = {1190--1208},
  doi     = {10.1137/0916069}
}

@misc{AlfassiMeiromMor2024DQES,
    author        = {Alfassi, Ittay and Meirom, Dekel and Mor, Tal},
    title         = {Discretized Quantum Exhaustive Search for Variational Quantum Algorithms},
    year          = {2024},
    eprint        = {2407.17659},
    archivePrefix = {arXiv},
    primaryClass  = {quant-ph},
    doi           = {10.48550/arXiv.2407.17659},
  note          = {arXiv preprint arXiv:2407.17659},
  url           = {https://doi.org/10.48550/arXiv.2407.17659}
}

@misc{NakamuraTsuji2025CenteredGCI,
  author        = {Nakamura, Shohei and Tsuji, Hiroshi},
  title         = {The {Gaussian} Correlation Inequality for Centered Convex Sets and the Case of Equality},
  year          = {2025},
  eprint        = {2504.04337},
  archivePrefix = {arXiv},
  primaryClass  = {math.FA},
  doi           = {10.48550/arXiv.2504.04337},
  note          = {arXiv preprint arXiv:2504.04337},
  url           = {https://doi.org/10.48550/arXiv.2504.04337}
}

@article{AmbainisEtAl2002,
  author        = {Ambainis, Andris and Nayak, Ashwin and Ta-Shma, Amnon and Vazirani, Umesh},
  title         = {Dense Quantum Coding and Quantum Finite Automata},
  journal       = {Journal of the ACM},
  year          = {2002},
  volume        = {49},
  number        = {4},
  pages         = {496--511},
  doi           = {10.1145/581771.581773},
  eprint        = {quant-ph/9804043},
  archivePrefix = {arXiv}
}

@article{FullerEtAl2024QuantumRelaxations,
  author        = {Fuller, Bryce and Hadfield, Charles and Glick, Jennifer R. and Imamichi, Takashi and Itoko, Toshinari and Thompson, Richard J. and Jiao, Yang and Kagele, Marna M. and Blom-Schieber, Adriana W. and Raymond, Rudy and Mezzacapo, Antonio},
  title         = {Approximate Solutions of Combinatorial Problems via Quantum Relaxations},
  journal       = {IEEE Transactions on Quantum Engineering},
  year          = {2024},
  volume        = {5},
  pages         = {1--15},
  doi           = {10.1109/TQE.2024.3421294},
  eprint        = {2111.03167},
  archivePrefix = {arXiv},
  primaryClass  = {quant-ph}
}

@article{ErdosRenyi1960,
  author  = {Erd{\H{o}}s, Paul and R{\'e}nyi, Alfr{\'e}d},
  title   = {On the Evolution of Random Graphs},
  journal = {Publications of the Mathematical Institute of the Hungarian Academy of Sciences},
  year    = {1960},
  volume  = {5},
  pages   = {17--61}
}

@article{Linhart1979EdgeCurvature,
  author  = {Linhart, J.},
  title   = {{Kantenkr{\"u}mmung und Umkugelradius konvexer Polyeder}},
  journal = {Acta Mathematica Academiae Scientiarum Hungaricae},
  year    = {1979},
  volume  = {34},
  pages   = {1--2},
  doi     = {10.1007/BF01902585}
}

@book{Schneider2014ConvexBodies,
  author    = {Schneider, Rolf},
  title     = {Convex Bodies: The Brunn--Minkowski Theory},
  publisher = {Cambridge University Press},
  year      = {2014},
  edition   = {2},
  series    = {Encyclopedia of Mathematics and its Applications},
  volume    = {151},
  doi       = {10.1017/CBO9781139003858}
}

@article{SpenglerEtAl2012MUBEntanglement,
  author  = {Spengler, Christoph and Huber, Marcus and Brierley, Stephen and Adaktylos, Theodor and Hiesmayr, Beatrix C.},
  title   = {Entanglement detection via mutually unbiased bases},
  journal = {Physical Review A},
  volume  = {86},
  pages   = {022311},
  year    = {2012},
  doi     = {10.1103/PhysRevA.86.022311}
}

@misc{SemreFrankel2026RandSEMI,
  author        = {Semre, Emilio and Frankel, Steven},
  title         = {{Rand-SEMI-QAOA}: Finite-Budget Depth-One {MaxCut} Ensembles on Compressed Quantum Registers},
  year          = {2026},
  eprint        = {2608.15655},
  archivePrefix = {arXiv},
  primaryClass  = {quant-ph},
  note          = {arXiv preprint arXiv:2608.15655},
  url           = {https://arxiv.org/abs/2608.15655}
}

@article{McCleanEtAl2018BarrenPlateaus,
  author  = {McClean, Jarrod R. and Boixo, Sergio and Smelyanskiy, Vadim N. and Babbush, Ryan and Neven, Hartmut},
  title   = {Barren plateaus in quantum neural network training landscapes},
  journal = {Nature Communications},
  volume  = {9},
  pages   = {4812},
  year    = {2018},
  doi     = {10.1038/s41467-018-07090-4}
}

@article{CerezoEtAl2021CostDependent,
  author  = {Cerezo, M. and Sone, Akira and Volkoff, Tyler and Cincio, Lukasz and Coles, Patrick J.},
  title   = {Cost function dependent barren plateaus in shallow parametrized quantum circuits},
  journal = {Nature Communications},
  volume  = {12},
  pages   = {1791},
  year    = {2021},
  doi     = {10.1038/s41467-021-21728-w}
}

\end{document}